\newif\ifsubmit
\newcommand{\EZ}[1]{#1}
\newcommand{\EZComm}[1]{}
\newcommand{\MD}[1]{#1}
\newcommand{\MDComm}[1]{}
\newcommand{\PG}[1]{#1}
\newcommand{\PGComm}[1]{}
\newcommand{\pgComm}[1]{}
\newcommand{\EZ}[1]{\textcolor{blue}{#1}}
\newcommand{\EZComm}[1]{{\scriptsize \textcolor{blue}{[Elena{:} #1]}}}
\newcommand{\MD}[1]{\textcolor{red}{#1}}
\newcommand{\MDComm}[1]{{\scriptsize \textcolor{red}{[Mariangiola{:} #1]}}}
\newcommand{\PG}[1]{\textcolor{magenta}{#1}}
\newcommand{\PGComm}[1]{{\scriptsize \textcolor{magenta}{[Paola{:} #1]}}}
\newcommand{\pgComm}[1]{{\scriptsize \textcolor{green}{[Paola{:} #1]}}}\fi
\newtheorem{theorem}{Theorem}
\newtheorem{lemma}[theorem]{Lemma}
\newenvironment{proof}[1][Proof]{\begin{trivlist}
\item[\hskip \labelsep {\bfseries #1}]}{\end{trivlist}}
\newcommand{\HSep}{\hbox to \textwidth{\bf\hrulefill}}
\newcommand{\lab}[1]{\ensuremath{\scriptstyle{\textsc{(#1)}}}}
\newcommand{\Space}{\hskip 0.8em}
\newcommand{\NamedRule}[4]{\scriptstyle{\textsc{(#1)}}\
\displaystyle                  
\frac{#2}{#3}\           
\scriptstyle{#4}    
}
\newcommand{\Rule}[3]{{\scriptstyle{}}\
\displaystyle
\frac{#1}{#2}\           
#3     
}
\newcommand{\refToFigure}[1]{Figure~\ref{fig:#1}}
\newcommand{\refToSection}[1]{Section~\ref{sect:#1}}
\newenvironment{grammatica}{$\begin{array}[t]{lcll}}{\end{array}$}
\newcommand{\produzione}[3]{#1&{:}{:}=&#2 & \mbox{{\small{#3}}}}
\newcommand{\terminale}[1]{{\text{\tt #1}}}
\newcommand{\te} {\mathit{t}} 
\newcommand{\val}{\mathit{v}}
\newcommand{\x}{\mathit{x}} 
\newcommand{\y}{\mathit{y}} 
\newcommand{\z}{\mathit{z}} 
\newcommand{\nval}{\mathit{n}}
\newcommand{\tenv}{\Gamma}
\newcommand{\error}{\mathit{error}}
\newcommand{\BinExp}[3]{#2\mathrel{#1}#3}
\newcommand{\SumExp}[2]{\BinExp{\SumKw}{#1}{#2}}
\newcommand{\SumKw}{\terminale{+}}
\newcommand{\LambdaExp}[2]{\lambda #1.#2}
\newcommand{\AppExp}[2]{#1\, #2}
\newcommand{\Unbind}[2]{\langle\;#1{\;\mid\;}#2\;\rangle}
\newcommand{\Rebind}[2]{#1[#2]}
\newcommand{\tsubst}{\mathit{r}}
\newcommand{\tsubstval}{\tsubst^\val}
\newcommand{\context}{{\cal E}}
\newcommand{\emptycontext}{[\,]}
\newcommand{\subst}{\sigma}
\newcommand{\ev}{\longrightarrow}
\newcommand{\evstar}{\ev^\star}
\newcommand{\extractTEnv}[1]{\mathit{tenv}(#1)}
\newcommand{\extractSubst}[1]{\mathit{subst}(#1)}
\newcommand{\FV}{\mathit{FV}}
\newcommand{\Int}{{\mathbb Z}}
\newcommand{\ApplySubst}[2]   {#1\{#2\}}
\newcommand{\RestrictSubst}[2]   {#1_{\mid #2}}
\newcommand{\CancelSubst}[2]   {#1_{\setminus #2}}
\newcommand{\dom}{\mathit{dom}}
\newcommand{\inContext}[1]{\context[#1]}
\newcommand{\level}{\mathit{k}} 
\newcommand{\Nat}{{\mathbb N}}
\newcommand{\levelPrime}{\mathit{h}} 
\newcommand{\intType}{\terminale{int}}
\newcommand{\codeType}{\terminale{code}}
\newcommand{\TypeWithLevel}[2]{#1^{#2}}
\newcommand{\T}{\mathit{T}}
\newcommand{\V}{\mathit{V}}
\newcommand{\funType}[2]{#1 \rightarrow {#2}}
\newcommand{\Inter}[2]{#1\wedge#2}
\newcommand{\incrLevel}[2]{(#1)^{\oplus#2}}
\newcommand{\congr}{\equiv}
\newcommand{\IsWFExp}[3]           {#1\vdash #2 : #3}
\newcommand{\IsWFSubst}[3]           {#1\vdash #2:#3}
\newcommand{\Subst}[2]   {{#1,#2}}
\newcommand{\OK}{\sf{ok}}
\newcommand{\DLet}[3] {\terminale{dlet} \mathrel{#1}
\terminale{=}\mathrel{#2}\terminale{in}\mathrel{#3}}
\title{Intersection types for unbind and rebind\thanks{This work has been partially supported
by MIUR DISCO - Distribution, Interaction, Specification,
Composition for Object Systems, and IPODS - Interacting Processes in Open-ended Distributed Systems.}}
\author{Mariangiola Dezani-Ciancaglini
\institute{Dip. di Informatica, Univ. di Torino, Italy}
\and Paola Giannini
\institute{Dip. di Informatica, Univ. del Piemonte Orientale, Italy}
 \and Elena Zucca
 \institute{DISI, Univ. di Genova, Italy}
 }
\begin{document}
\maketitle

\begin{abstract}
We define a type system with intersection types for an extension of
lambda-calculus with unbind and rebind operators. In this calculus,
a term $\te$ with free variables $\x_1,\ldots,\x_n$, representing
open code, can be packed into an \emph{unbound} term
$\Unbind{\x_1,\ldots,\x_n}{\te}$, and passed around as a value. In
order to execute inside code, an unbound term should be explicitly
\emph{rebound} at the point where it is used. Unbinding and
rebinding are hierarchical, that is, the term $\te$ can contain
arbitrarily nested unbound terms, whose inside code can only be
executed after a sequence of rebinds has been applied.
Correspondingly, types are decorated with levels, and a term has
type $\TypeWithLevel{\tau}{\level}$ if it needs $\level$ rebinds in
order to reduce to a value of type $\tau$. With intersection types
we model the fact that a term can be used differently in contexts
providing different numbers of unbinds. In particular, top-level
terms, that is, terms not requiring unbinds to reduce to values,
should have a \emph{value} type, that is, an intersection type where
at least one element has level 0. With
the proposed intersection type system we get soundness under the
call-by-value strategy, an issue which was not resolved by previous
type systems.
\end{abstract}

\section*{Introduction}
In {previous work} \cite{DezaniEtAl09,DezaniEtAl10} we {introduced} an extension
of lambda-calculus with unbind and rebind operators, providing a
simple unifying foundation for dynamic scoping, rebinding and
delegation mechanisms. This extension {relies} on the following
ideas:
\begin{itemize}
\item A term $\Unbind{\tenv}{\te}$, where $\tenv$ is a set of
{typed variables} called \emph{unbinders}, is a value{, of a special type $\codeType$,} representing ``open
code'' which may contain free variables in the domain of $\tenv$.
\item To be used, open code should be \emph{rebound} through the
operator  $\Rebind{\te}{\tsubst}$, where $\tsubst$ is a (typed) substitution
(a {map} from {typed} variables to terms). Variables in the domain of
$\tsubst$ are called \emph{rebinders}. When the rebind operator is
applied to a term $\Unbind{\tenv}{\te}$, a dynamic check is
performed: if all unbinders are rebound with values of the required
types, then the substitution is performed, otherwise a dynamic error
is raised.
\end{itemize}

{For instance, the term\footnote{In the examples we omit type annotations when they are
irrelevant.}
 $\Rebind{\Unbind{\x,\y}{\SumExp{\x}{\y}}}{\x\mapsto
1,\y\mapsto 2}$ reduces to $\SumExp{1}{2}$, whereas both
$\Rebind{\Unbind{\x,\y}{\SumExp{\x}{\y}}}{\x\mapsto 1}$ and
$\Rebind{\Unbind{\x{:}\intType}{\SumExp{\x}{1}}}{\x{:}\funType{\intType}{\intType}\mapsto
\LambdaExp{\y}{\SumExp{\y}{1}}}$ reduce to
$\error$.}

{Unbinding and rebinding are hierarchical, that is, the term
$\te$ can contain arbitrarily nested unbound terms, whose inside
code can only be executed after a sequence of rebinds has been
applied\footnote{See the Conclusion for more comments on this
choice.}. For instance, \emph{two} rebinds must be applied to the
term $\Unbind{\x}{\x+\Unbind{\x}{\x}}$ in order to get an integer:
\begin{center} $
\begin{array}{rcl}
\Rebind{\Rebind{\Unbind{\x}{\x+\Unbind{\x}{\x}}}{\x\mapsto
1}}{\x\mapsto 2} & {\ev} & \Rebind{(1+\Unbind{\x}{\x})}{\x\mapsto
2}\\
& \ev& (\Rebind{1}{\x\mapsto 2})+(\Rebind{\Unbind{\x}{\x}}{\x\mapsto
2})
\\
& \evstar& 1+ 2
\end{array}$
\end{center}
}

{Correspondingly, types are decorated with levels, and a term has
type $\TypeWithLevel{\tau}{\level}$ if it needs $\level$ rebinds in
order to reduce to a value of type $\tau$. {With intersection
types we} model the fact that a term can be used differently in
contexts which  provide a different number $\level$ of unbinds. For
instance, the term $\Unbind{\x}{\x+\Unbind{\x}{\x}}$ above has type
$\Inter{\TypeWithLevel{\intType}{2}}{\TypeWithLevel{\codeType}{0}}$,
since it can be safely used in two ways: either in a context which
provides two rebinds, as shown above, or as a value of type
$\codeType$, as, e.g., in:
\begin{center}
$\begin{array}{l}
\AppExp{(\LambdaExp{\y}{\Rebind{\Rebind{\y}{\x\mapsto 1}}{\x\mapsto
2}})}{\Unbind{\x}{\x+\Unbind{\x}{\x}}}
\end{array}$
\end{center}
{On the other side, the term $\Unbind{\x}{\x+\Unbind{\x}{\x}}$ does
\emph{not} have type $\TypeWithLevel{\intType}{1}$, since by
applying only one rebind {with $[{\x\mapsto 1}]$} we get the term
$1+\Unbind{\x}{\x}$ which is stuck.}}

{The} {use of intersection
types allows us to get soundness w.r.t. the call-by-value strategy.
{This} issue was not resolved by {previous type systems}
\cite{DezaniEtAl09, DezaniEtAl10}} where, for this reason, we only
considered the call-by-name reduction strategy. To see the problem,
consider the following example.

The term
\begin{center}
$\begin{array}{l}
\AppExp{(\LambdaExp{\y} {\Rebind{\y}{\x\mapsto 2}})}{(1+\Unbind{\x}{\x})}
\end{array}$
\end{center}
is stuck in the call-by-value strategy, since the argument is not a
value, hence should be ill typed{, even though} the argument has type
$\TypeWithLevel{\intType}{1}$, which is a correct type for the
argument of the function. By using intersection types, this can be
{enforced} by requiring arguments of functions to have
\emph{value types}, that is, intersections where (at least) one of
the conjuncts is a type of level $0$. In this way, the above term
is ill typed. Note that a call-by-name evaluation of the above
term gives
\begin{center} $
\begin{array}{rcl}
\AppExp{(\LambdaExp{\y}{\Rebind{\y}{\x\mapsto
2}})}{(1+\Unbind{\x}{\x})}& \ev &
(1+\Unbind{\x}{\x}) {[{\x\mapsto 2}]}\\
&  \ev & (\Rebind{1}{\x\mapsto 2})+
(\Rebind{\Unbind{\x}{\x}}{\x\mapsto 2})\\
&  \evstar & 1+2.
\end{array}
$
\end{center}
Instead, the term $\AppExp{(\LambdaExp{\y}{\Rebind{\y}{\x\mapsto
2}})}{\Unbind{\x}{1+\x}}$ is well typed, and it reduces as follows in both call-by-value
and call-by-name strategies:
\begin{center} $
\begin{array}{rcl}
\AppExp{(\LambdaExp{\y}{\Rebind{\y}{\x\mapsto
2}})}{\Unbind{\x}{1+\x}}& \ev &
\Unbind{\x}{1+\x} {[{\x\mapsto 2}]}\\
&  \ev & 1+2.
\end{array}
$ \end{center}

 It is interesting to note that this phenomenon is due to
the possibility for operators (in our case for +) of acting on
arguments which are unbound terms. This design choice is quite
natural in view of discussing {open code, as in MetaML
\cite{MetaML}.} In pure $\lambda$-calculus there is no closed term
which converges when evaluated by the lazy call-by-name strategy and
is stuck when evaluated by the call-by-value strategy. Instead there
are closed terms, like $(\lambda x. \lambda y.y)((\lambda
z.z\,z)(\lambda z.z\,z))$, which converge when evaluated by the lazy
call-by-name strategy and diverge when evaluated by the
call-by-value strategy, and open terms, like $(\lambda x.\lambda
y.y)z$, which converge when evaluated by the lazy call-by-name
strategy and are stuck when evaluated by the call-by-value
strategy.\footnote{Note that following Pierce \cite{pierce} we
consider only $\lambda$-abstractions as values, while for Plotkin
\cite{plot77} also variables are values.}

{In summary, the contribution of this paper is the following. We} define a type system 
for the calculus of Dezani et al. \cite{DezaniEtAl09,DezaniEtAl10}, where,
differently from those papers, we omit types on the lambda-binders
in order to get the whole expressivity of the intersection type
constructor \cite{Venn94}. The type system shows, in our opinion, an
interesting and novel application of intersection types. Indeed,
they handle in a uniform way the three following issues.
\begin{itemize}
\item {Functions may be} applied to arguments of (a finite set of) different types.
\item {A term can be used differently in contexts providing different numbers} of unbinds.
Indeed, an intersection type for a term includes a type of form $\TypeWithLevel{\tau}{\level}$ if the term needs $\level$ rebinds in order to reduce to a value of type $\tau$.
\item {Most notably, the type system guarantees soundness for the call-by-value
strategy, by requiring that top-level terms, that is, terms which do
not require unbinds to reduce to values}, should have {value
types.}
\end{itemize}

\paragraph{\bf Paper Structure.}In \refToSection{syntaxsem} we {introduce the syntax and the operational
semantics of the language}. In \refToSection{typesystem} we  define
the type system and prove its soundness in \refToSection{soundness}. In \refToSection{rwc} we
discuss related and further work.

\section{{Calculus}}\label{sect:syntaxsem}
The syntax and {reduction rules} of the calculus are given in
\refToFigure{calculus}.

\begin{figure}[h]
\HSep
\begin{grammatica}
\produzione{\te}{\x\mid\nval\mid\SumExp{\te_1}{\te_2}\mid\LambdaExp{\x}{\te}\mid\AppExp{\te_1}{\te_2}\mid\Unbind{\tenv}{\te}\mid\Rebind{\te}{\tsubst}\mid\error}{{term}}\\
\produzione{\tenv}{\x_1{:}\T_1,\ldots,\x_m{:}\T_m}{{type context}}\\
\produzione{\tsubst}{\x_1{:}\T_1\mapsto\te_1,\ldots,\x_m{:}\T_m\mapsto\te_m }{{(typed) substitution}}   \\ \\
\produzione{\val}{\LambdaExp{\x}{\te}\mid\Unbind{\tenv}{\te}\mid\nval}{{value}}\\
\produzione{\tsubstval}{\x_1{:}\T_1\mapsto\val_1,\ldots,\x_m{:}\T_m\mapsto\val_m }{{value substitution}} \\ \\
\produzione{\context}{\emptycontext\mid\SumExp{\context}{\te}\mid\SumExp{\nval}{\context}\mid\AppExp{\context}{\te}\mid\AppExp{\val}{\,\context}\mid\Rebind{\te}{\tsubst,\x{{:}\T}\mapsto
\context}}{{evaluation context}}\\
\produzione{{\subst}}{{\x_1\mapsto\val_1,\ldots,\x_m\mapsto\val_m}}{{(untyped) substitution}}
\end{grammatica}

\HSep

\begin{center}
$\begin{array}{l@{\quad}l@{\quad}l}
 {\SumExp{\nval_1}{\nval_2}\ev\nval} &
  {\mbox{if}\quad\tilde{\nval}=\tilde{\nval}_1+^\Int\tilde{\nval}_2} & \lab{Sum} \\
   {\AppExp{(\LambdaExp{\x}{\te})}{\val}\ev\ApplySubst{\te}{\x\mapsto\val}}{}{}
   & & \lab{App}
 \\
  {\Rebind{\Unbind{\tenv}{\te}}{\tsubstval}\ev\ApplySubst{\te}{\RestrictSubst{\extractSubst{\tsubstval}}{\dom(\tenv)}}}
  & \mbox{if}\quad\tenv\subseteq\extractTEnv{\tsubstval} & \lab{{RebindUnbindYes}} \\
{\Rebind{\Unbind{\tenv}{\te}}{\tsubstval}\ev\error} &
{\mbox{if}\quad\tenv\not\subseteq\extractTEnv{\tsubstval}}{} & \lab{{RebindUnbindNo}} \\
 {\Rebind{\nval}{\tsubstval}\ev\nval}& & \lab{RebindNum}\\
 {\Rebind{(\SumExp{\te_1}{\te_2})}{\tsubstval}\ev\SumExp{\Rebind{\te_1}{\tsubstval}}{\Rebind{\te_2}{\tsubstval}}}
 & & \lab{RebindSum} \\
{\Rebind{(\LambdaExp{\x}{\te})}{\tsubstval}\ev\LambdaExp{\x}{\Rebind{\te}{\tsubstval}}}
 & & \lab{RebindAbs} \\
 {\Rebind{(\AppExp{\te_1}{\te_2})}{\tsubstval}\ev\AppExp{\Rebind{\te_1}{\tsubstval}}{\Rebind{\te_2}{\tsubstval}}}
 & & \lab{RebindApp} \\
  {\Rebind{\Rebind{\te}{\tsubst}}{\tsubstval}\ev\Rebind{\te'}{\tsubstval}}
 & \mbox{if}\quad\Rebind{\te}{\tsubst}\ev\te' & \lab{RebindRebind} \\
   {\Rebind{\error}{\tsubstval}\ev\error}& &\lab{RebindError} \\
\\
\end{array}$
$\begin{array}{l@{\quad\quad}l}
 \prooftree  \te\ev \te'\quad\quad \context\not=\emptycontext
  \justifies \inContext{\te}\ev\inContext{\te'} \using \lab{Ctx}
\endprooftree &
\prooftree \te\ev \error \quad\quad \context\not=\emptycontext
  \justifies \inContext{\te}\ev\error \using \lab{CtxError}
\endprooftree
\end{array}$
\end{center}
\caption{Syntax and reduction rules}\label{fig:calculus} \HSep
\end{figure}

Terms of the calculus are the $\lambda$-calculus terms, the unbind
and rebind constructs, and the dynamic error. {Moreover, we}
include integers with addition to show how unbind and rebind behave
on primitive data types. {Unbinders and rebinders are annotated
with types $\T$, which will be described in the following section.
Here it is enough to assume that they include standard $\intType$
and functional types.} Type contexts and
substitutions are assumed to be maps, that is, order is immaterial
and variables cannot appear twice.

Free variables and application of a substitution to a term are
defined in \refToFigure{subst}. {Note that an unbinder behaves
like a $\lambda$-binder:  for instance, in a term of shape
$\Unbind{\x}{\te}$, the unbinder $\x$ introduces a local scope, that
is, binds free occurrences of $\x$ in $\te$.  Hence, a substitution
for $\x$ is not propagated inside $\te$. Moreover, a condition,
 which prevents
capture of free variables similar to the $\lambda$-abstraction
case is needed. For instance, the
term
$\AppExp{(\LambdaExp{\y}{\Unbind{\x}{\y}})}{(\LambdaExp{\z}{\x})}$
reduces to $\ApplySubst{\Unbind{\x}{\y}}{\y \mapsto
\LambdaExp{\z}{\x}}$ which is stuck, i.e., it does not reduce to
$\Unbind{\x}{\LambdaExp{\z}{\x}}$,} which would be
wrong.

However, $\lambda$-binders and unbinders behave differently
w.r.t. $\alpha$-equivalence. A $\lambda$-binder can be renamed, as
usual, together with all its bound variable occurrences, whereas this
is \emph{not} safe for an unbinder: for instance,
$\Rebind{\Unbind{\x}{\SumExp{\x}{1}}}{\x\mapsto 2}$ is  not
equivalent to $\Rebind{\Unbind{\y}{\SumExp{\y}{1}}}{\x\mapsto 2}$.
Only a \emph{global} renaming, e.g., leading to
$\Rebind{\Unbind{\y}{\SumExp{\y}{1}}}{\y\mapsto 2}$, would be
safe.\footnote{A more sophisticated solution {\cite{BiermanEtAl03a}} allows local renaming of unbinders by a
``precompilation'' step annotating variables with \emph{indexes},
which can be $\alpha$-renamed, but are not taken into account by the
rebinding mechanism.  Indeed, variable occurrences which are
unbinders, rebinders, or bound to an unbinder,  actually play the
role of \emph{names} rather than standard variables. Note that 
variables in a rebinder, e.g., $\x$  in $[\x\mapsto 2]$, are not
bindings, and $\x$ is neither a free, nor a bound variable. See the
Conclusion for more comments on this difference.}

\begin{figure}[h]
\HSep

$\begin{array}{l}
\\
  \FV(\x)=\{\x\}\\
  \FV(\nval)=\emptyset\\
  \FV(\SumExp{\te_1}{\te_2})=\FV(\te_1)\cup\FV(\te_2)\\
  \FV(\LambdaExp{\x}{\te})=\FV(\te)\setminus\{\x\}\\
  \FV(\AppExp{\te_1}{\te_2})=\FV(\te_1)\cup\FV(\te_1)\\
  \FV(\Unbind{\tenv}{\te})=\FV(\te)\setminus\dom(\tenv)\\
 { \FV(\Rebind{\te}{\tsubst})=\FV(\te)\cup\FV(\extractSubst{\tsubst})}\\
  { \FV(\x_1\mapsto\te_1,\ldots,\x_m\mapsto\te_m)=\bigcup_{i\in 1..m}\FV(\te_i)}
  \\ \\
  \ApplySubst{\x}{\subst}={\val}\ \mbox{if}\ {\subst(\x)=\val}\\
  \ApplySubst{\x}{\subst}=\x\ \mbox{if}\ \x\not\in\dom(\subst)\\
 \ApplySubst{\nval}{\subst}=\nval\\
  \ApplySubst{(\SumExp{\te_1}{\te_2})}{\subst}=\SumExp{\ApplySubst{\te_1}{\subst}}{\ApplySubst{\te_2}{\subst}}\\
  {\ApplySubst{({\LambdaExp{\x}{\te}})}{\subst}={\LambdaExp{\x}{\ApplySubst{\te}{\CancelSubst{\subst}{\{\x\}}}}}\ \mbox{if}\ \x\not\in\FV(\subst)}\\
  {\ApplySubst{(\AppExp{\te_1}{\te_2})}{\subst}=\AppExp{\ApplySubst{\te_1}{\subst}}{\ApplySubst{\te_2}{\subst}}}\\
  \ApplySubst{\Unbind{\tenv}{\te}}{\subst}=\Unbind{\tenv}{\ApplySubst{\te}{\CancelSubst{\subst}{\dom(\tenv)}}}\ \mbox{if}\ {\dom(\tenv)\cap\FV(\subst)=\emptyset}\\
  {\ApplySubst{\Rebind{\te}{\x_1{:}\T_1\mapsto\te_1,\ldots,\x_m{:}\T_m\mapsto\te_m}}{\subst}=\Rebind{\ApplySubst{\te}{\subst}}{\x_1{:}\T_1\mapsto\ApplySubst{\te_1}{\subst},\ldots,\x_m{:}\T_m\mapsto\ApplySubst{\te_m}{\subst}}}
\end{array}
$ \caption{Free variables and {application of substitution}}\label{fig:subst} \HSep
\end{figure}

The call-by-value operational semantics is described by the
reduction rules and the definition of the evaluation contexts
$\context$. We denote by $\tilde{\nval}$ the integer represented by
the constant $\nval$, {by $\extractTEnv{\tsubst}$ and
$\extractSubst{\tsubst}$ the type context and the untyped substitution\EZ{, respectively,}
extracted from a typed substitution $\tsubst$, by $\dom$ the domain
of a map,} by $\RestrictSubst{\subst}{\{\x_1,\ldots,\x_n\}}$ and
$\CancelSubst{\subst}{\{\x_1,\ldots,\x_n\}}$ the substitutions
obtained from $\subst$ by restricting to or removing variables in
set $\{\x_1,\ldots,\x_n\}$, respectively.

Rules for sum and application (of a lambda to a value) are standard.
The $\lab{Rebind\_}$ rules determine what happens when a rebind is
applied to a term. There are two rules for the rebinding of an
unbound term. Rule $\lab{{RebindUnbindYes}}$ is applied when the
unbound variables are all present (and of the required types), in
which case the associated values are substituted, otherwise rule
$\lab{{RebindUnbindNo}}$ produces a dynamic error. This is formally
expressed by the side condition
$\tenv\subseteq\extractTEnv{\tsubst}$. Note that a rebind
applied to a term may be stuck {even though} the variables  are all
present and of the right type, when the substitution is not defined.
This may be caused by the fact that when applied to an unbound term,
a substitution could cause capture of free variables (see the example
earlier in this section). \EZ{Rebinding a term which is a number or $\error$ does not affect the term.} On sum, {abstraction and application}, the
rebind is simply propagated to subterms, and if a rebind is applied
to a {rebound} term, $\lab{RebindRebind}$, the inner rebind is
applied first. The evaluation order is specified by rule $\lab{Ctx}$
and the definition of contexts, $\context$, that gives the
call-by-value strategy. Finally rule $\lab{CtxError}$ propagates
errors. To make rule selection deterministic, rules $\lab{Ctx}$ and
$\lab{CtxError}$ are applicable only when
$\context\not=\emptycontext$. As usual $\evstar$ is the reflexive
and transitive closure of $\ev$.

When a {rebind} is applied, only variables which were explicitly
specified as unbinders are replaced. For instance, the term
$\Rebind{\Unbind{\x}{\SumExp{\x}{\y}}}{\x\mapsto 1,\y\mapsto 2}$
reduces to $\SumExp{1}{\y}$ rather than to $\SumExp{1}{2}$. In other
words, the {unbinding/rebinding} mechanism is explicitly controlled by
the programmer.

\EZComm{cambiato come in ITA}Looking at the rules we can see that
\EZ{there is no rule for the rebinding of a variable}. Indeed, it will be resolved
only when the variable \EZ{is} substituted as effect of a standard
application. \EZ{For instance, the term  $\AppExp{(\LambdaExp{\y}{\Rebind{\y}{\x\mapsto 2})}}{\Unbind{\x}{\x+1}}$ reduces to
$\Rebind{\Unbind{\x}{\x+1}}{\x\mapsto 2}$.}

Note that in rule $\lab{RebindAbs}$, the binder $\x$ of the
$\lambda$-abstraction does not interfere with the rebind, even in
 case $\x\in\dom(\tsubst)$. Indeed, rebind has no effect on the
free occurrences of $\x$ in the body of the $\lambda$-abstraction.
For instance,
$\AppExp{\Rebind{(\LambdaExp{\x}{\x+\Unbind{\x}{\x}})}{\x\mapsto
1}}{2}$, which is  $\alpha$-equivalent to
$\AppExp{\Rebind{(\LambdaExp{\y}{\y+\Unbind{\x}{\x}})}{\x\mapsto
1}}{2}$, reduces in \PG{some} steps to $2+1$. On the other hand,
both $\lambda$-{binders} and unbinders prevent a substitution for
the corresponding variable from being propagated in their scope, for
instance:
\begin{center}
$\begin{array}{l}
\Rebind{\Unbind{\x,\y}{\x+\LambdaExp{\x}{(\x+\y)}+\Unbind{\x}{\x+\y}}}{\x\mapsto
2, \y\mapsto 3}\ev 2+(\LambdaExp{\x}{\x+3})+\Unbind{\x}{\x+3}
\end{array}$
\end{center}

A standard
(static) binder can also affect code to be dynamically rebound, when
it binds free variables in a substitution $\tsubst$, as shown by the
following example:
\begin{center}
$\begin{array}{l}
\AppExp{\AppExp{(\LambdaExp{\x}{\LambdaExp{\y}{\SumExp{\Rebind{\y}{\x\mapsto
\x}}{\x}}})}{1}}{\Unbind{\x}{\SumExp{\x}{2}}}\ev
\AppExp{(\LambdaExp{\y}{\SumExp{\Rebind{\y}{\x\mapsto
1}}{1}})}{\Unbind{\x}{\SumExp{\x}{2}}}\\
\ev \SumExp{\Rebind{\Unbind{\x}{\SumExp{\x}{2}}}{\x\mapsto 1}}1\ev \SumExp{\SumExp 1 2} 1.
\end{array}
$ \end{center} {Note
that in $[\x\mapsto\x]$ the two occurrences of $\x$ refer to
different variables. Indeed, the second is bound by the external
lambda whereas the first {one is a rebinder.}}

\section{Type system}\label{sect:typesystem}
We have three kinds of types: {\em primitive types} $\tau$, {\em
value types} $\V$, and {\em term types} $\T$; see
\refToFigure{types}.

\begin{figure}[h]
\HSep
\begin{grammatica}
\produzione{\T}{\TypeWithLevel{\tau}{\level}\mid\Inter{\T_1}{\T_2}{\Space\Space (\level\in\Nat)}}{{term type}}\\
\produzione{\V}{\TypeWithLevel{\tau}{0}\mid\Inter{\V}{\T}}{{value type}}\\
\produzione{\tau}{\intType\mid\codeType\mid\funType{\T}{\T'}}{{primitive type}}\\
\end{grammatica}
\caption{Types}\label{fig:types} \HSep
\end{figure}

Primitive types characterise the shape of values. In our case we have
integers ($\intType$), functions (\EZ{$\funType{\T}{\T'}$}), and
$\codeType$, which is the type of a term $\Unbind{\tenv}{\te}$, that
is, (possibly) open code.

Term types are primitive types decorated with a {\em level} $\level$
or intersection of types.  If a term has type
$\TypeWithLevel{\tau}{\level}$, then by applying $\level$ rebind
operators to the term we get a value of primitive type $\tau$. We
abbreviate a type $\TypeWithLevel{\tau}{0}$ by $\tau$. Terms have
the intersection type $\Inter{\T_1}{\T_2}$ when they have both types
$\T_1$ and $\T_2$. On intersection we have the usual congruence due
to idempotence, commutativity, associativity, and distributivity
over arrow type, defined in the first four clauses of
\refToFigure{congruence}.

Value types characterise terms that reduce to values, so they are
intersections in which (at least) one of the conjuncts must be a
primitive type {of level 0.} For instance, the term
$\Unbind{\x:\intType}{\Unbind{\y:\intType}{\SumExp{\x}{\y}}}$ has
type
$\Inter{\Inter{\TypeWithLevel{\codeType}{0}}{\TypeWithLevel{\codeType}{1}}}{\TypeWithLevel{\intType}{2}}$,
since it is code that applying one rebinding produces code that, in
turn, applying another rebinding produces an integer. The term
$\Unbind{\x:\intType}{\SumExp{\x}{\Unbind{\y:\intType}{\SumExp{\y}{1}}}}$
has type
$\Inter{\TypeWithLevel{\codeType}{0}}{\TypeWithLevel{\intType}{2}}$
since it is code that applying one rebinding produces the term
${\SumExp{\nval}{\Unbind{\y:\intType}{\SumExp{\y}{1}}}}$, for some
$\nval$. Both
$\Inter{\Inter{\TypeWithLevel{\codeType}{0}}{\TypeWithLevel{\codeType}{1}}}{\TypeWithLevel{\intType}{2}}$
and
$\Inter{\TypeWithLevel{\codeType}{0}}{\TypeWithLevel{\intType}{2}}$
are value types, whereas ${\TypeWithLevel{\intType}{{1}}}$,
which is the type of term
${\SumExp{\nval}{\Unbind{\y:\intType}{\SumExp{\y}{1}}}}$, is not a
value type. Indeed, in order to produce an integer value the term
must be rebound (at least) once. The typing rule for application
enforces the restriction that a term may be applied only to terms
reducing to values, that is the call-by-value strategy. Similar for
the terms associated with variables in a substitution.

Let $I=\{1,\ldots,m\}$. We write $\bigwedge_{i\in
I}\TypeWithLevel{\tau_i}{\level_i}$ and $\bigwedge_{i\in 1..m}\TypeWithLevel{\tau_i}{\level_i}$ to denote
$\TypeWithLevel{\tau_1}{\level_1}\wedge\cdots\wedge\TypeWithLevel{\tau_m}{\level_m}$.
Note that any type $\T$ is such that $\T=\bigwedge_{i\in 1..m}\TypeWithLevel{\tau_i}{\level_i}$, for some $\tau_i$ and
$\level_i$ ($i\in 1..m$). Given a type $\T=\bigwedge_{i\in 1..m}\TypeWithLevel{\tau_i}{\level_i}$, with
$\incrLevel{\T}{\levelPrime}$ we denote the type $\bigwedge_{i\in 1..m}\TypeWithLevel{\tau_i}{\level_i+\levelPrime}$.
\begin{figure}[h] \hrule
\[
\begin{array}{c}

\begin{array}{c@{\quad\quad\quad\quad}c@{\quad\quad\quad\quad}c}
  \T\congr\Inter{\T}{\T} & \Inter{\T_1}{\T_2}\congr\Inter{\T_2}{\T_1} &
  \Inter{\T_1}{(\Inter{\T_2}{\T_3})}\congr\Inter{(\Inter{\T_1}{\T_2})}{\T_3}
\end{array} \\ \\

\Inter {\TypeWithLevel{(\funType{\T}{\T_1})}{\level}}
{\TypeWithLevel{(\funType{\T}{\T_2})}{\level}} \congr
\TypeWithLevel{(\funType{\T}{\Inter{\T_1}{\T_2}})}{\level} \qquad \TypeWithLevel{(\funType{\T'}{\incrLevel{\T}{\levelPrime}})}{\level+1}\congr\TypeWithLevel{(\funType{\T'}{\incrLevel{\T}{(\levelPrime+1)}})}{\level}\\ \\
\end{array}
\]
 \hrule
 \caption{Congruence on types}\label{fig:congruence}
 \end{figure}

\refToFigure{congruence} defines congruence on types.
In addition to the standard properties of intersection, the last
\EZ{clause} says that the level of function types can be switched
with the one of their results. That is, unbinding and
lambda-abstraction commute. So rebinding may be applied to
lambda-abstractions, since reduction rule $\lab{RebindAbs}$
pushes rebinding inside. For instance, the terms
$\Unbind{\tenv}{\LambdaExp{\x}{\te}}$ and
$\LambdaExp{\x}{\Unbind{\tenv}{{\te}}}$ may be used interchangeably.
  \begin{figure}[h] \hrule
\[
\begin{array}{c}
\begin{array}{c@{\quad\quad\quad\quad}c@{\quad\quad\quad\quad}c}
  \intType^\level\leq\intType^{\level+1} &
   \Inter{\T_1}{\T_2}\leq{\T_1}
\\ \\
 \prooftree
    \T_2\leq\T_1\quad\quad\T'_1\leq\T'_2
    \justifies
  {\TypeWithLevel{(\funType{\T_1}{\T'_1})}{\level}}\leq
  {\TypeWithLevel{(\funType{\T_2}{\T'_2})}{\level}}
  \endprooftree
  &
 \prooftree
    \T_1\leq\T_1'\quad\quad\T_2\leq\T_2'
    \justifies
   \Inter{\T_1}{\T_2}\leq \Inter{\T'_1}{\T'_2}
  \endprooftree
 \\ \\
 \prooftree
    \T_1\leq\T_2\quad\quad\T_2\leq\T_3
    \justifies
   \T_1\leq \T_3
         \endprooftree &
    \prooftree
    \T_1\congr\T_2
    \justifies
   \T_1\leq \T_2

  \endprooftree
\end{array}
\end{array}
\]
 \hrule
 \caption{Subtyping on types}\label{fig:subtyping}
 \end{figure}

Subtyping, defined in \refToFigure{subtyping}, expresses subsumption,
that is, if a term has type $\T_1$, then it can be used also in a
context requiring a type $\T_2$ with $\T_1\leq\T_2$. For
integer types it is justified by the reduction rule
$\lab{RebindNum}$, since once we obtain {an integer value} any number of
rebindings may be applied.\footnote{Note that the generalisation of $\intType^\level\leq\intType^{\level+1}$ to  $\T^\level\leq\T^{\level+1}$ is sound but useless.}
For intersections, it is
intersection elimination. The other rules are the standard extension
of subtyping to function and intersection types,
transitivity, and the fact that congruent types are in the subtyping
relation.

  \begin{figure}[h]
\HSep
  \begin{center}
$\begin{array}{l}
 \NamedRule{T-Inter}{\IsWFExp{\Gamma}{\te}{\T_1}\Space\IsWFExp{\Gamma}{\te}{\T_2}}{\IsWFExp{\Gamma}{\te}{\Inter{\T_1}{\T_2}}}{}\Space
 \NamedRule{T-Sub}{\IsWFExp{\Gamma}{\te}{\T}\Space \T\leq\T'}{\IsWFExp{\Gamma}{\te}{\T'}}{}\Space
\NamedRule{T-Var}{\Gamma(\x)=\T}{\IsWFExp{\Gamma}{\x}{\T}}{}\\[4ex]
\NamedRule{T-Num}{}{\IsWFExp{\Gamma}{\nval}{\TypeWithLevel{\intType}{0}}}{}\Space
\NamedRule{T-Sum}{\IsWFExp{\Gamma}{\te_1}{\TypeWithLevel{\intType}{\level}}\Space\IsWFExp{\Gamma}{\te_2}{\TypeWithLevel{\intType}{\level}}}{\IsWFExp{\Gamma}{\SumExp{\te_1}{\te_2}}{\TypeWithLevel{\intType}{\level}}}{}\Space\NamedRule{T-Error}{}{\IsWFExp{\Gamma}{\error}{\T}}{}
\\[4ex]
\NamedRule{T-Abs}{\IsWFExp{\Subst{\tenv}{\x{:}\T}}{\te}{\T'}}{\IsWFExp{\tenv}{\LambdaExp{\x}{\te}}{\TypeWithLevel{(\funType{\T}{\T'})}{0}}}{}\Space
\NamedRule{T-App}
{\IsWFExp{\Gamma}{\te_1}{\TypeWithLevel{(\funType{{\V}}{\T})}{0}}\Space
\IsWFExp{\Gamma}{\te_2}{{\V}}} {\IsWFExp{\Gamma}{\AppExp{\te_1}{\te_2}}{\T}}{}\\[4ex]
\NamedRule{T-Unbind-0}{\IsWFExp{\Subst{\tenv}{\tenv'}}{\te}{\T}}{\IsWFExp{\tenv}{\Unbind{\tenv'}{\te}}{\TypeWithLevel{\codeType}{0}}}{}\Space
 \NamedRule{T-Unbind}{\IsWFExp{\Subst{\tenv}{\tenv'}}{\te}{\T}}{\IsWFExp{\tenv}{\Unbind{\tenv'}{\te}}{\incrLevel{\T}{1}}}{}\\[4ex]
\NamedRule{T-Rebind} {\IsWFExp{\tenv}{\te}{\incrLevel{\T}{1}}\Space
\IsWFSubst{\tenv}{\tsubst}{\OK}}
{\IsWFExp{\tenv}{\Rebind{\te}{\tsubst}}{\T}}{}\Space
\NamedRule{T-Rebinding}{\IsWFExp{\tenv}{\te_i}{\V_i}\Space \V_i\leq\T_i\Space(i\in
1..m)}{\IsWFSubst{\tenv}{\x_1{:}\T_1\mapsto\te_1,\ldots,\x_m{:}\T_m\mapsto\te_m}{\OK}}{}
\end{array}$
\end{center}
\caption{Typing rules}\label{fig:typing} \HSep
\end{figure}

Typing rules are defined in \refToFigure{typing}.
We denote by $\Subst{\tenv}{\tenv'}$ the concatenation of the two type contexts $\tenv$ and $\tenv'$ with disjoint domains, which turns out to be a type context (map) as well.

A number has the
value type $\TypeWithLevel{\intType}{{0}}$. With rule $\lab{T-Sub}$, however, it can be
given the type $\TypeWithLevel{\intType}{\level}$ for any $\level$.
Rule $\lab{T-Sum}$ requires that both operands of a sum have the same
type,  with rule $\lab{T-Sub}$ the term can be given as level the
biggest level of the operands. Rule $\lab{T-Error}$ permits the use of
$\error$ in any context. In rule $\lab{T-Abs}$ the initial level of a
lambda abstraction is $0$ since the term is a value. With rule
$\lab{T-Sub}$ we may decrease the level of the return type by
increasing, by the same amount, the level of the whole arrow type.
This
is useful since, for example, we can derive
\[\IsWFExp{}{\LambdaExp{\x}
{\SumExp\x{\Unbind{\y{:}\intType}{\SumExp\y{\Unbind{\z{:}\intType}{\z}}}}}}
{\TypeWithLevel{(\funType{\intType}{\TypeWithLevel{\intType}{1}})}{1}}
\]
by first deriving the type
${\TypeWithLevel{(\funType{\intType}{\TypeWithLevel{\intType}{2}})}{0}}$
for the term, and then applying $\lab{T-Sub}$. Therefore, we can give
type to the rebinding of the term, by applying rule $\lab{T-Rebinding}$
that requires that the term to be rebound has level bigger than
$0$, and whose resulting type is decreased by one. For example,
\[
\IsWFExp{}{\Rebind{(\LambdaExp{\x}
{\SumExp\x{\Unbind{\y{:}\intType}{\SumExp\y{\Unbind{\z{:}\intType}{\z}}}}})}
{\y{:}\intType\mapsto
5}}{\TypeWithLevel{(\funType{\intType}{\TypeWithLevel{\intType}{1}})}{0}}
\]
which means that the term reduces to  a lambda abstraction, i.e., to
a value, which applied to an integer needs one {rebind} in order to
produce an integer or error. The rule $\lab{T-App}$ assumes that the
type of the function be a  level $0$ type. This is not a
restriction, since using rule $\lab{T-Sub}$, if the term has any
function type it is possible to assign it a level $0$ type. {The
type of the argument must be a value type.  This condition is
justified by the example given in the introduction.}

The two rules for unbinds reflect the fact that code is both a
value, and as such has \MD{type $\codeType^0$,} 
and also a
term that needs one more rebinding than its body in order to produce
a value. {Taking the intersection of the types derived for the
same unbind with these two rules we can derive a value type for the
unbind and use it as argument of an application. For example typing
$\Unbind{ y:{\tt int}}{y}$ by $\tt code^0\wedge int^1$ we can derive
type $\TypeWithLevel{\intType}{{0}}$ for the term
\begin{center}{($\lambda x. 2 + x [ y$: {\tt int} $\mapsto 3])\Unbind{
y:{\tt int}}{y}$.}
\end{center}}

The present type system only takes into account the number of
rebindings which are applied to a term, whereas  no check is
performed on the name and the type of the variables to be rebound.
This check is performed at runtime by rules $\lab{RebindUnbindYes}$
and $\lab{RebindUnbindNo}$. \EZ{In {papers introducing the calculus} \cite{DezaniEtAl09,DezaniEtAl10} we have
also provided an alternative type system (for the call-by-name
calculus) such that this check is statically perfomed, see the Conclusion for more comments.}

Note that terms which are stuck since application of substitution is
undefined, such as  the previous example
$\AppExp{(\LambdaExp{\y}{\Unbind{\x}{\y}})}{(\LambdaExp{\z}{\x})}$,
are ill typed. Indeed, in the typing rules for unbinding, unbinders  are required to be disjoint from outer binders, and there is no weakening rule. Hence, a type context for the example should simultaneously include a type for $\x$ and not include a type for $\x$ in order to type $\LambdaExp{\z}{\x}$ and $\LambdaExp{\y}{\Unbind{\x}{\y}}$, respectively. For the same reason,
a peculiarity of the given type system is that weakening does not hold, in spite of the fact that no notion of linearity is enforced.

\section{Soundness of the type system}\label{sect:soundness}

The type system is {\em safe} since types are preserved by reduction
and a closed term with value type is a value or \EZ{$\error$ or} can be reduced. In
other words the system has both the {\em subject reduction} and the
{\em progress} properties. Note that a term that may not be
assigned a value type \MD{can be} stuck, as for example
$\SumExp1{\Unbind{\x{:}\intType}{\x}}$, which has type
$\TypeWithLevel{\intType}{1}$.

The proof of subject reduction (Theorem \ref{srt}) is standard. We
start with a lemma (Lemma \ref{l:subtypingArrow}) on the properties
of the equivalence and pre-order relation on types, which can be
easily shown by induction on their definitions. Then we give an
Inversion Lemma (Lemma \ref{il}), a Substitution Lemma (Lemma
\ref{sl}) and a Context Lemma (Lemma \ref{cont}). Lemmas \ref{il}
and \ref{sl} can be easily shown by induction on {type derivations}.
The proof of Lemma \ref{cont} is by structural induction on
contexts.

\begin{lemma}\label{l:subtypingArrow}\
\begin{enumerate}
\item\label{l:subtypingArrow1} If $\bigwedge_{
i\in I}\TypeWithLevel{(\funType{\T_i}{\T'_i})}{0}\equiv\TypeWithLevel{(\funType{\T}{\T'})}{0}$, then
$\T_i\equiv\T$ for all $i\in I$ and $\bigwedge_{
i\in I}\T'_i\equiv\T'$.
\item\label{l:subtypingArrow2} If $\bigwedge_{
i\in I}\TypeWithLevel{(\funType{\T_i}{\T'_i})}{0}\leq\T$, then there
are $L$, $\hat\T_l$, $\hat\T'_l$, $(l\in L)$, such that
$\T\congr\bigwedge_{l\in
L}\TypeWithLevel{(\funType{\hat\T_l}{\hat\T'_l})}{0}, $ and for all
$l\in L$ there is $J_l\subseteq \EZ{I}$\EZComm{giusto?} with
\begin{itemize}
  \item $\hat\T_l\leq\T_j$ for all $j\in J_l$, and
  \item $\bigwedge_{j\in J_l}\hat\T'_j\leq \T'_l$.\MDComm{ho aggiunto dei cappelli}
\end{itemize}
\end{enumerate}
\end{lemma}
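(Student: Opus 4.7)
Both parts are proved by induction on the derivation of the relation in question ($\equiv$ for (1) and $\leq$ for (2)), exploiting the fact that the only clauses in \refToFigure{congruence} and \refToFigure{subtyping} capable of producing or preserving a level-$0$ arrow type (without increasing the level, which is not available to us since the left-hand side is entirely at level $0$) are the distributivity clause and the arrow/intersection subtyping rules. The level-switching clause $\TypeWithLevel{(\funType{\T'}{\incrLevel{\T}{\levelPrime}})}{\level+1}\congr\TypeWithLevel{(\funType{\T'}{\incrLevel{\T}{(\levelPrime+1)}})}{\level}$ always relates a type of level $\level$ with one of level $\level+1$, so it never rewrites one intersection of level-$0$ arrows into another; only idempotence, commutativity, associativity and distributivity are in play for (1).

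\textbf{Part (1).} Reading the distributivity clause backwards, a single arrow $\TypeWithLevel{(\funType{\T}{\T_1\wedge\T_2})}{0}$ splits into two arrows with \emph{identical} domain $\T$, and conversely only arrows with the same domain can be merged. A clean way to formalise this is to introduce a normal form: write every type as an intersection of ``atomic'' components of shape $\TypeWithLevel{\tau}{\level}$, where a domain in an arrow is itself in this form; then up to commutativity/associativity/idempotence two such normal forms are equivalent iff they are equal as multisets (this is verified by induction on the proof of $\equiv$, observing that each congruence rule preserves the multiset of atomic components). Applied to our hypothesis, this forces $\T_i\equiv\T$ for every $i\in I$ and $\bigwedge_{i\in I}\T'_i\equiv\T'$.

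\textbf{Part (2).} Induction on the derivation of $\bigwedge_{i\in I}\TypeWithLevel{(\funType{\T_i}{\T'_i})}{0}\leq\T$, by cases on the last rule. The integer-subtyping rule is impossible because the left-hand side is an intersection of arrows. In the congruence case we invoke (1) (in a mild generalisation to arbitrary intersections of level-$0$ arrows) to decompose $\T$ as $\bigwedge_{l\in L}\TypeWithLevel{(\funType{\hat\T_l}{\hat\T'_l})}{0}$ with each $\hat\T_l\equiv\T_j$ for every $j$ in a suitable $J_l\subseteq I$, and $\bigwedge_{j\in J_l}\T'_j\equiv\hat\T'_l$. In the intersection-elimination case we keep the $L$ of the selected conjunct. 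In the arrow-subtyping case, $L$ is a singleton, contravariance gives $\hat\T\leq\T_i$ and covariance gives the codomain inequality. In the intersection-compatibility case we juxtapose the index sets obtained for each conjunct by the inductive hypothesis.

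\textbf{Main obstacle.} The delicate case is transitivity: from $\bigwedge_{i\in I}\TypeWithLevel{(\funType{\T_i}{\T'_i})}{0}\leq\T''\leq\T$ the first inductive hypothesis gives $\T''\equiv\bigwedge_{m\in M}\TypeWithLevel{(\funType{\bar\T_m}{\bar\T'_m})}{0}$ with subsets $J'_m\subseteq I$, while the second, applied to $\T''\leq\T$ using this very decomposition (which is legitimate by (1)), gives $\T\equiv\bigwedge_{l\in L}\TypeWithLevel{(\funType{\hat\T_l}{\hat\T'_l})}{0}$ with subsets $K_l\subseteq M$. The final witnesses are obtained by composition, $J_l:=\bigcup_{m\in K_l}J'_m$, and one checks that $\hat\T_l\leq\bar\T_m\leq\T_j$ for $j\in J'_m$, $m\in K_l$, and $\bigwedge_{j\in J_l}\T'_j\leq\bigwedge_{m\in K_l}\bar\T'_m\leq\hat\T'_l$, using the intersection-compatibility and transitivity rules of subtyping.
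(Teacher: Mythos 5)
The paper never actually writes out a proof of this lemma: it only remarks that it ``can be easily shown by induction on the definitions'' of $\congr$ and $\leq$, which is exactly the route you take, and your handling of the genuinely delicate case---transitivity of $\leq$ in Part~(2), resolved by composing the index sets as $J_l:=\bigcup_{m\in K_l}J'_m$ and chaining the two inductive hypotheses---is correct and supplies the main content such a proof needs. (You also silently correct the apparent typo in the printed statement, reading the second bullet as $\bigwedge_{j\in J_l}\T'_j\leq\hat\T'_l$; that is the right reading.)

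There is, however, one concrete gap: your dismissal of the level-switching clause $\TypeWithLevel{(\funType{\T'}{\incrLevel{\T}{\levelPrime}})}{\level+1}\congr\TypeWithLevel{(\funType{\T'}{\incrLevel{\T}{(\levelPrime+1)}})}{\level}$ is not justified. With $\level=0$ this clause relates a level-$1$ arrow to a level-$0$ arrow, so a level-$0$ arrow whose codomain has all levels positive, e.g.\ $\TypeWithLevel{(\funType{\T_1}{\TypeWithLevel{\intType}{1}})}{0}\congr\TypeWithLevel{(\funType{\T_1}{\TypeWithLevel{\intType}{0}})}{1}$, \emph{is} congruent to something that is not an intersection of level-$0$ arrows. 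This matters for intermediate types in a transitivity chain for $\congr$ in Part~(1) (they need not be intersections of level-$0$ arrows at all), for applications of the clause inside domains and codomains, and, unavoidably, for the ``mild generalisation'' of Part~(1) that you invoke in the congruence case of Part~(2), where the right-hand side $\T$ is arbitrary. Moreover, the key claim of your normal-form argument---that ``each congruence rule preserves the multiset of atomic components''---is false as stated: the level-switching clause changes the atom itself (same arrow, different split of the level between the arrow and its codomain), and the distributivity clause changes the number of atoms unless codomains of arrows are also decomposed, whereas your normal form only recurses into domains. The argument can be repaired by (i) decomposing codomains as well, so that $\TypeWithLevel{(\funType{\T}{\Inter{\T_1}{\T_2}})}{\level}$ contributes the same atoms as $\Inter{\TypeWithLevel{(\funType{\T}{\T_1})}{\level}}{\TypeWithLevel{(\funType{\T}{\T_2})}{\level}}$, and (ii) fixing a canonical distribution of levels between an arrow and its codomain (for instance pushing the whole uniform increment of the codomain into the arrow's own level); only with both adjustments does ``congruent iff equal multisets of canonicalised atoms'' hold and yield Part~(1) together with the generalisation needed in Part~(2).
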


\begin{lemma}[Inversion Lemma]\label{il}\
\begin{enumerate}
\item\label{il1} If $\IsWFExp{\Gamma}{\x}{\T}$, then $\Gamma(\x)\leq\T$.
\item\label{il2} If $\IsWFExp{\Gamma}{\nval}{\T}$, then {\em $\TypeWithLevel{\intType}{0}\leq\T$.}
\item\label{il3} If {\em $\IsWFExp{\Gamma}{\SumExp{\te_1}{\te_2}}{\T}$,} then {\em $\TypeWithLevel{\intType}{0}\leq\T$} and {\em $\IsWFExp{\Gamma}{\te_1}{\T}$} and {\em $\IsWFExp{\Gamma}{\te_2}{\T}$.}
\item\label{il4} If $\IsWFExp{\Gamma}{\LambdaExp{\x}{\te}}{\T}$,
then there are $m$, $\T_i$, $\T'_i$ ($i\in 1..m$) such that
$\T\equiv\bigwedge_{i\in 1..m}\TypeWithLevel{(\funType{\T_i}{\T'_i})}{0}$, and $\IsWFExp{\Subst{\tenv}{\x{:}\T_i}}{\te}{\T'_i}$ ($i\in 1..m$).
\item\label{il5} If $\IsWFExp{\Gamma}{\AppExp{\te_1}{\te_2}}{\T}$, then
there is $\V$ such that
$\IsWFExp{\Gamma}{\te_1}{\TypeWithLevel{(\funType{\V}{\T})}{0}}$ and
$\IsWFExp{\Gamma}{\te_2}{\V}$.
\item\label{il6} If $\IsWFExp{\Gamma}{\Unbind{\tenv'}{\te}}{\bigwedge_{i\in 1..m}\TypeWithLevel{\tau_i}{\level_i}}$, then
\begin{itemize}
\item $\IsWFExp{\Subst{\tenv}{\tenv'}}{\te}{\TypeWithLevel{\tau_i}{\level_i-1}}$ for all
$\level_i>0$, and
\item $\tau_i=\codeType $ for all $\level_i=0$.
\end{itemize}
When $m=1$ and $\level_1=0$ we also have
$\IsWFExp{\Subst{\tenv}{\tenv'}}{\te}{\T'}$, for some $\T'$.
\item\label{il7} If $\IsWFExp{\Gamma}{\Rebind{\te}{\tsubst}}{\bigwedge_{i\in 1..m}\TypeWithLevel{\tau_i}{\level_i}}$, then $\IsWFExp{\tenv}{\te}{\bigwedge_{i\in 1..m}\TypeWithLevel{\tau_i}{\level_i+1}}$ and $\IsWFSubst{\tenv}{\tsubst}{\OK}$.
\item\label{il8} If $\IsWFSubst{\tenv}{\x_1{:}\T_1\mapsto\te_1,\ldots,\x_m{:}\T_m\mapsto\te_m}{\OK}$, then there are $\V_i\leq\T_i$ for $i\in 1..m$ such that $\IsWFExp{\tenv}{\te_i}{\V_i}$.
\end{enumerate}
\end{lemma}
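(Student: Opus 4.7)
The plan is to prove all eight parts simultaneously by induction on the structure of the type derivation, proceeding by cases on the last rule applied. For each part, the last rule is either the syntax-directed rule for the corresponding term form, or one of the two non-syntax-directed rules \lab{T-Inter} and \lab{T-Sub}; these latter two cases are handled uniformly across all parts, while the syntax-directed case is essentially immediate for each.

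The uniform cases go as follows. When the last rule is \lab{T-Inter}, with premises $\IsWFExp{\Gamma}{\te}{\T_1}$ and $\IsWFExp{\Gamma}{\te}{\T_2}$ and conclusion $\T\equiv\Inter{\T_1}{\T_2}$, I apply the induction hypothesis to both premises and combine the results, using \lab{T-Inter} again at the meta-level to reassemble the typings of subterms. For parts \ref{il2}, \ref{il3}, \ref{il6}, \ref{il7} this directly yields the claim about $\T$ since the shape constraints (e.g. $\TypeWithLevel{\intType}{0}\leq\T$, or the level/primitive-type conditions) are preserved by intersection. When the last rule is \lab{T-Sub}, with premise $\IsWFExp{\Gamma}{\te}{\T'}$ and $\T'\leq\T$, I apply the induction hypothesis to $\T'$ and then chain through the subtyping $\T'\leq\T$ using transitivity (Figure~\ref{fig:subtyping}).

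The syntax-directed cases are: part \ref{il1} follows from \lab{T-Var} with $\T\congr\Gamma(\x)$; parts \ref{il2}, \ref{il3}, \ref{il7} are immediate from \lab{T-Num}, \lab{T-Sum}, \lab{T-Rebind} respectively; part \ref{il8} is immediate from \lab{T-Rebinding}; part \ref{il5} is immediate from \lab{T-App}; part \ref{il6} from \lab{T-Unbind-0} and \lab{T-Unbind} (observing that $\incrLevel{\T}{1}$ has every conjunct of level at least 1, while \lab{T-Unbind-0} produces the sole conjunct $\TypeWithLevel{\codeType}{0}$). For the last clause of part \ref{il6}, when $m=1$ and $\level_1=0$, the only way to conclude is via \lab{T-Unbind-0}, which provides the required premise $\IsWFExp{\Subst{\tenv}{\tenv'}}{\te}{\T'}$.

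The main obstacle will be part \ref{il4}, the $\lambda$-abstraction case, because when the last rule is \lab{T-Sub} the subtyping $\T'\leq\T$ may substantially reshape the arrow structure, and when the last rule is \lab{T-Inter} we must reassemble two intersections of arrows into a single such intersection. This is exactly what Lemma~\ref{l:subtypingArrow} is designed for: after applying the induction hypothesis to get $\T'\equiv\bigwedge_{i\in I}\TypeWithLevel{(\funType{\T_i}{\T'_i})}{0}$ with $\IsWFExp{\Subst{\tenv}{\x{:}\T_i}}{\te}{\T'_i}$, Lemma~\ref{l:subtypingArrow}.\ref{l:subtypingArrow2} gives $\T\congr\bigwedge_{l\in L}\TypeWithLevel{(\funType{\hat\T_l}{\hat\T'_l})}{0}$ with indices $J_l\subseteq I$ such that $\hat\T_l\leq\T_j$ and $\bigwedge_{j\in J_l}\T'_j\leq\hat\T'_l$. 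Using \lab{T-Sub} on $\T_i$ in the type context (covariantly, thanks to contravariance on the domain of arrows), together with \lab{T-Inter} to take the intersection over $J_l$ and a final \lab{T-Sub} on the codomain, reconstructs $\IsWFExp{\Subst{\tenv}{\x{:}\hat\T_l}}{\te}{\hat\T'_l}$ for each $l\in L$, as required. The \lab{T-Inter} subcase is handled analogously, using Lemma~\ref{l:subtypingArrow}.\ref{l:subtypingArrow1} to identify equivalent domains after grouping.
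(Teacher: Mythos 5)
Your overall strategy — induction on the typing derivation with a uniform treatment of \lab{T-Inter} and \lab{T-Sub} and with Lemma~\ref{l:subtypingArrow} doing the real work in the $\lambda$-abstraction case — is exactly the paper's proof, and your handling of part~\ref{il4} matches the paper's in both subcases (the paper uses Lemma~\ref{l:subtypingArrow}(\ref{l:subtypingArrow2}) for \lab{T-Sub} and a direct concatenation of the two arrow intersections for \lab{T-Inter}).

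The one place where your write-up is genuinely short is part~\ref{il5} under \lab{T-Inter}, which is the second of the two cases the paper bothers to spell out, and which you explicitly exclude from your list of parts where ``combining via \lab{T-Inter}'' suffices while only covering part~\ref{il5} via \lab{T-App} in the syntax-directed discussion. The difficulty is that the statement demands a \emph{single} witness $\V$: the induction hypothesis applied to the two premises yields two possibly different witnesses $\V_1,\V_2$ with $\IsWFExp{\Gamma}{\te_1}{\TypeWithLevel{(\funType{\V_i}{\T_i})}{0}}$ and $\IsWFExp{\Gamma}{\te_2}{\V_i}$, and merely intersecting the results does not produce the required shape. The paper closes this by deriving $\IsWFExp{\Gamma}{\te_1}{\Inter{\TypeWithLevel{(\funType{\V_1}{\T_1})}{0}}{\TypeWithLevel{(\funType{\V_2}{\T_2})}{0}}}$ and $\IsWFExp{\Gamma}{\te_2}{\Inter{\V_1}{\V_2}}$ with \lab{T-Inter}, and then using the derived subtyping $\Inter{\TypeWithLevel{(\funType{\V_1}{\T_1})}{0}}{\TypeWithLevel{(\funType{\V_2}{\T_2})}{0}}\leq\TypeWithLevel{(\funType{\Inter{\V_1}{\V_2}}{\Inter{\T_1}{\T_2}})}{0}$ (contravariance on domains, covariance on codomains, plus distributivity) and \lab{T-Sub} to take $\V=\Inter{\V_1}{\V_2}$, which is again a value type. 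This is the same manoeuvre you deploy for part~\ref{il4}, so the repair is within your toolkit, but as written the case is missing.
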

\begin{proof}
By induction on typing derivations. We only consider some
interesting cases.

For Point (\ref{il4}), if the last applied rule is \lab{T-Sub}, then  the
result follows by induction from Lemma
\ref{l:subtypingArrow}(\ref{l:subtypingArrow2}). For the same Point
if the last applied rule is \lab{T-Inter}, then let
$\IsWFExp{\Gamma}{\LambdaExp{\x}{\te}}{\Inter{\T}{T'}}$. By
induction hypothesis there are $m$, $m'$, $\T_i$, $\T'_j$,
$\hat{\T}_i$, $\hat{\T}'_j$ ($i\in 1..m$, $j\in 1..m'$) such that:
\begin{itemize}
  \item $\T\equiv\bigwedge_{i\in 1..m}\TypeWithLevel{(\funType{\T_i}{\hat{\T}_i})}{0}$,
  \item $\IsWFExp{\Subst{\tenv}{\x{:}\T_i}}{\te}{\hat{\T}_i}$ ($i\in 1..m$),
  \item $\T'\equiv\bigwedge_{j\in 1..m'}\TypeWithLevel{(\funType{\T'_j}{\hat{\T}'_j})}{0}$, and
  \item $\IsWFExp{\Subst{\tenv}{\x{:}\T'_j}}{\te}{\hat{\T}'_j}$ ($j\in 1..m'$).
\end{itemize}
Therefore \EZ{$\Inter{\T}{\T'}\equiv\bigwedge_{i\in
1..m}\TypeWithLevel{(\funType{\T_i}{\hat{\T}_i})}{0}\wedge\bigwedge_{j\in
1..m'}\TypeWithLevel{(\funType{\T'_j}{\hat{\T}'_j})}{0}$.}\EZComm{corretto il penultimo pedice da $i$ a $j$}

For Point (\ref{il5}) if the last applied rule is \lab{T-Inter},
\PG{then let
$\IsWFExp{\Gamma}{\AppExp{\te_1}{\te_2}}{\Inter{\T_1}{T_2}}$. By
induction hypothesis we have
$\IsWFExp{\Gamma}{\te_1}{\funType{\V_i}{\EZ{\T_i}}}$ and
$\IsWFExp{\Gamma}{\te_2}{\V_i}$, for some $\V_i$,
$i=1,2$.}\PGComm{$T_1$ e $T_2$ non sono arbitrari} We derive
$\IsWFExp{\Gamma}{\te_1}{(\funType{\V_1}{\T_1})\wedge(\funType{\V_2}{\T_2})}$
and $\IsWFExp{\tenv}{\te_2}{\V_1\wedge\V_2}$ by rule \lab{T-Inter}.
\PG{Since}
$(\funType{\V_1}{\T_1})\wedge(\funType{\V_2}{\T_2})\leq\funType{\V_1\wedge\V_2}{\T_1\wedge\T_2}$,
\PG{applying rule \lab{T-Sub} we get}
$\IsWFExp{\Gamma}{\te_1}{\funType{\V_1\wedge\V_2}{\T_1\wedge\T_2}}$.\MDComm{ho
eliminato tutti i $'$}

\end{proof}

\begin{lemma}[Substitution Lemma]\label{sl}
If $\IsWFExp{\Subst{\tenv}{\x{:}\T}}{\te}{\T'}$,  and
$\IsWFExp{\tenv}{\val}{\T}$, then
$\IsWFExp{\tenv}{\ApplySubst{\te}{\x\mapsto\val}}{\T'}$.
\end{lemma}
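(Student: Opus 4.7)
The plan is to proceed by induction on the derivation of $\IsWFExp{\Subst{\tenv}{\x{:}\T}}{\te}{\T'}$, with cases on the last typing rule applied. The leaves are immediate: $\lab{T-Num}$ and $\lab{T-Error}$ hold since $\ApplySubst{\nval}{\x\mapsto\val}=\nval$ and $\ApplySubst{\error}{\x\mapsto\val}=\error$; for $\lab{T-Var}$, if $\te=\x$ then $\ApplySubst{\x}{\x\mapsto\val}=\val$ and we conclude from the hypothesis $\IsWFExp{\tenv}{\val}{\T}$ combined with $\lab{T-Sub}$ using $\tenv(\x)=\T\leq\T'$ given by Lemma~\ref{il}(\ref{il1}), whereas if $\te=\y\neq\x$ the derivation is unaffected since $\x\neq\y$ means $\tenv(\y)$ is unchanged. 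The structural rules $\lab{T-Sub}$ and $\lab{T-Inter}$ are routine: apply the IH to each premise and re-apply the same rule.

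For rules that propagate over subterms without introducing a binder, namely $\lab{T-Sum}$, $\lab{T-App}$, $\lab{T-Rebind}$, and $\lab{T-Rebinding}$, the substitution commutes with the term constructor according to Figure~\ref{fig:subst}, so each subterm is handled by the IH and the rule is re-applied on the results. The binding cases $\lab{T-Abs}$, $\lab{T-Unbind}$, and $\lab{T-Unbind-0}$ are the interesting ones. For $\lab{T-Abs}$ with $\te=\LambdaExp{\y}{\te_1}$, Lemma~\ref{il}(\ref{il4}) factorises the judgment through an intersection of arrow types with premises $\IsWFExp{\Subst{\Subst{\tenv}{\x{:}\T}}{\y{:}\T_i}}{\te_1}{\T'_i}$. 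The side condition $\y\notin\FV(\val)$ from the definition of substitution application, together with the freshness of $\y$ w.r.t.\ $\dom(\tenv)$, lets us swap $\x{:}\T$ and $\y{:}\T_i$ in the context, observe that the typing of $\val$ still holds in $\Subst{\tenv}{\y{:}\T_i}$, and then apply the IH to obtain $\IsWFExp{\Subst{\tenv}{\y{:}\T_i}}{\ApplySubst{\te_1}{\x\mapsto\val}}{\T'_i}$; closing with $\lab{T-Abs}$ and $\lab{T-Inter}$ gives the required type of the substituted abstraction. The two unbind cases are handled analogously, using the side condition $\dom(\tenv')\cap\FV(\val)=\emptyset$ to justify extending the context by $\tenv'$ before invoking the IH.

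The main obstacle is precisely this context-extension step in the binding cases, because, as explicitly remarked after Figure~\ref{fig:typing}, the system does not admit a general weakening lemma: the disjointness conditions in $\lab{T-Unbind}$ and $\lab{T-Unbind-0}$ can be violated when a fresh variable is added to the context under which $\val$ was typed, if that variable happens to coincide with an unbinder occurring inside $\val$. The substitution application side conditions $\y\notin\FV(\val)$ and $\dom(\tenv')\cap\FV(\val)=\emptyset$ control the free variables of $\val$ but not its bound names, so one has to rely on the standard convention that bound variables in $\val$ are chosen fresh with respect to the binders traversed during the induction, so that extending $\tenv$ by $\y{:}\T_i$ (or by $\tenv'$) never clashes with an unbinder or $\lambda$-binder inside $\val$ and the existing derivation of $\IsWFExp{\tenv}{\val}{\T}$ transfers unchanged to the enlarged context.
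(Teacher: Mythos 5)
Your overall strategy --- induction on the typing derivation with a case split on the last rule --- is the only one available and is exactly what the paper (very tersely) indicates, and you have correctly isolated the one non-routine point: in the cases for $\lab{T-Abs}$, $\lab{T-Unbind}$ and $\lab{T-Unbind-0}$ the induction hypothesis can only be invoked after the derivation of $\IsWFExp{\tenv}{\val}{\T}$ has been transported to the enlarged context $\Subst{\tenv}{\y{:}\T_i}$ (resp.\ $\Subst{\tenv}{\tenv'}$), and the system has no weakening. The problem is that your way out --- ``choose the bound variables of $\val$ fresh with respect to the binders traversed'' --- is not available in this calculus. The paper is explicit that unbinders are \emph{names}: $\Rebind{\Unbind{\x}{\SumExp{\x}{1}}}{\x\mapsto 2}$ is \emph{not} equivalent to $\Rebind{\Unbind{\y}{\SumExp{\y}{1}}}{\x\mapsto 2}$, so you cannot $\alpha$-rename an unbinder occurring inside $\val$ to avoid a clash with a binder of $\te$. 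The substitution side conditions $\y\notin\FV(\val)$ and $\dom(\tenv')\cap\FV(\val)=\emptyset$ constrain only the \emph{free} variables of $\val$, exactly as you yourself observe, so the clash can genuinely occur and cannot be renamed away.

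When it occurs, the step does not merely become delicate; it fails. Take $\te=\Unbind{\y{:}\intType}{\x}$ and $\val=\Unbind{\y{:}\intType}{\y}$. Both $\IsWFExp{\x{:}\TypeWithLevel{\codeType}{0}}{\te}{\TypeWithLevel{\codeType}{0}}$ and $\IsWFExp{}{\val}{\TypeWithLevel{\codeType}{0}}$ are derivable, the side condition of the substitution holds since $\FV(\val)=\emptyset$, and $\ApplySubst{\te}{\x\mapsto\val}=\Unbind{\y{:}\intType}{\Unbind{\y{:}\intType}{\y}}$; but this term is untypable, because any derivation must type the inner unbind in a context containing $\y{:}\intType$, which requires the concatenation $\Subst{(\y{:}\intType)}{(\y{:}\intType)}$, whose domains are not disjoint. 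So the binding cases cannot be closed as you describe: one needs either an additional hypothesis (no unbinder of $\te$ occurs as an unbinder in $\val$) together with a restricted weakening lemma for variables not occurring anywhere in $\val$, or a global renaming discipline imposed before the lemma is ever invoked. Your proof is silent on how to obtain this, and since the convention you appeal to contradicts the calculus's treatment of unbinders, that silence is a genuine gap rather than a routine omission --- indeed it points at a hypothesis missing from the lemma statement itself.
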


\begin{lemma}[Context Lemma]\label{cont}
Let $\IsWFExp{\tenv}{\inContext{\te}}{\T}$, then
\begin{itemize}
\item $\IsWFExp{\tenv}{\te}{\T'}$ for some $\T'$, and
\item if $\IsWFExp{\tenv}{\te'}{\T'}$, then
$\IsWFExp{\tenv}{\inContext{\te'}}{\T}$, for all $\te'$.
\end{itemize}
\end{lemma}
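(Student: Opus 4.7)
I would proceed by structural induction on the evaluation context $\context$, using the Inversion Lemma (Lemma \ref{il}) at each step to decompose a derivation of $\IsWFExp{\tenv}{\inContext{\te}}{\T}$, then apply the induction hypothesis, and finally rebuild the derivation using the corresponding typing rule.

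For the base case $\context=\emptycontext$, we have $\inContext{\te}=\te$, so the witness for the first bullet is simply $\T'=\T$, and the second bullet holds trivially because $\inContext{\te'}=\te'$. For each inductive case $\context$ wraps some sub-context $\context'$ in a language construct, and the argument runs to the same pattern. For $\context=\SumExp{\context'}{\te_2}$: by Lemma \ref{il}(\ref{il3}) we obtain $\IsWFExp{\tenv}{\inContext[\context']{\te}}{\T}$ and $\IsWFExp{\tenv}{\te_2}{\T}$; the induction hypothesis on $\context'$ yields some $\T'$ with $\IsWFExp{\tenv}{\te}{\T'}$, and for any $\te''$ of type $\T'$ it gives $\IsWFExp{\tenv}{\inContext[\context']{\te''}}{\T}$, so rule \lab{T-Sum} (applied conjunct-wise and recombined with \lab{T-Inter}, if $\T$ is an intersection $\bigwedge_{i}\TypeWithLevel{\intType}{\level_i}$) reconstructs the typing of $\SumExp{\inContext[\context']{\te''}}{\te_2}$. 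The case $\context=\SumExp{\nval}{\context'}$ is symmetric. For $\context=\AppExp{\context'}{\te_2}$, Lemma \ref{il}(\ref{il5}) gives a value type $\V$ with $\IsWFExp{\tenv}{\inContext[\context']{\te}}{\TypeWithLevel{(\funType{\V}{\T})}{0}}$ and $\IsWFExp{\tenv}{\te_2}{\V}$; the induction hypothesis supplies $\T'$ and rule \lab{T-App} rebuilds the derivation. The case $\context=\AppExp{\val}{\context'}$ is analogous.

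For the remaining case $\context=\Rebind{\te_0}{\tsubst_0,\x{:}\T_0\mapsto\context'}$, I apply Lemma \ref{il}(\ref{il7}) to deduce that the substitution $\tsubst_0,\x{:}\T_0\mapsto\inContext[\context']{\te}$ is well typed, then Lemma \ref{il}(\ref{il8}) to obtain a value type $\V$ with $\V\leq\T_0$ and $\IsWFExp{\tenv}{\inContext[\context']{\te}}{\V}$. The induction hypothesis on $\context'$ then yields $\T'$ with $\IsWFExp{\tenv}{\te}{\T'}$ and the corresponding reconstruction step; rule \lab{T-Rebinding} followed by \lab{T-Rebind} rebuilds the overall typing of $\Rebind{\te_0}{\tsubst_0,\x{:}\T_0\mapsto\inContext[\context']{\te''}}$ with type $\T$.

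The main technical nuisance will be the interaction between the Inversion Lemma, which produces types only modulo $\congr$ and $\leq$, and the rigid shape required by the rebuilding rules such as \lab{T-Sum} (level-matched $\intType$) and \lab{T-App} (a level-$0$ arrow whose domain is a value type). To handle this uniformly I would, after each inversion step, reduce $\T$ to a canonical intersection $\bigwedge_{i\in 1..m}\TypeWithLevel{\tau_i}{\level_i}$, argue for each conjunct separately, and finally glue the results with \lab{T-Inter} and, where needed, \lab{T-Sub}. Once this pattern is fixed, every inductive case becomes essentially the same calculation, and no substantive obstacle remains.
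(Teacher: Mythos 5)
Your proposal is correct and follows the same route the paper indicates: the authors state only that Lemma \ref{cont} is proved ``by structural induction on contexts,'' and your case analysis (base case for $\emptycontext$, inversion via Lemma \ref{il} on the outermost construct, induction hypothesis on the sub-context, then reconstruction with the corresponding typing rule, handling intersections conjunct-wise via \lab{T-Sub} and \lab{T-Inter}) is exactly the expected elaboration of that sketch. The one point worth being explicit about in the \lab{T-Sum} case --- that $\TypeWithLevel{\intType}{0}\leq\T$ forces every conjunct of $\T$ to be of the form $\TypeWithLevel{\intType}{\level_i}$ --- you have already flagged, so no gap remains.
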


\begin{theorem}[Subject Reduction]\label{srt}
If $\IsWFExp{\tenv}{\te}{\T}$ and $\te\EZ{\ev}\te'$, then $\IsWFExp{\tenv}{\te'}{\T}$.\MDComm{non sono d'accordo su eliminare lo star, la SR vale per lo star!}
\end{theorem}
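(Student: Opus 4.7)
The plan is to induct on the derivation of $\te \ev \te'$, with case analysis on the last reduction rule. For each axiom I apply the Inversion Lemma to decompose $\IsWFExp{\Gamma}{\te}{\T}$, reassemble a derivation for $\te'$, and invoke the Substitution Lemma whenever a binder disappears. The congruence-style rules \lab{Ctx} and \lab{CtxError} are handled by the Context Lemma together with the induction hypothesis: the lemma extracts some type $\T'$ for the redex, the induction hypothesis produces the same type for the reduct, and the second part of the lemma puts it back to recover $\T$. The same pattern handles \lab{RebindRebind}, whose reduction premise is a strictly smaller derivation to which the induction hypothesis applies before re-applying \lab{T-Rebind}.

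Most axiom cases are mechanical. \lab{Sum}, \lab{RebindUnbindNo}, and \lab{RebindError} are immediate (the last two by \lab{T-Error}). For \lab{App}, inversion of application yields some $\V$ with $\IsWFExp{\Gamma}{\LambdaExp{\x}{\te}}{\TypeWithLevel{(\funType{\V}{\T})}{0}}$ and $\IsWFExp{\Gamma}{\val}{\V}$; inversion of abstraction combined with Lemma~\ref{l:subtypingArrow}(\ref{l:subtypingArrow1}) splits that arrow into an intersection of level-$0$ arrows sharing domain $\V$, giving $\IsWFExp{\Subst{\Gamma}{\x{:}\V}}{\te}{\T}$, and the Substitution Lemma finishes the job. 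For \lab{RebindNum}, inversion of Rebind and inversion on numerals force every conjunct of $\incrLevel{\T}{1}$ to be $\intType$ at a positive level, so $\IsWFExp{\Gamma}{\nval}{\T}$ follows from \lab{T-Num} and the subtyping $\intType^\level\leq\intType^{\level+1}$. The rules \lab{RebindSum} and \lab{RebindApp} simply push the rebind over the constructor, re-apply \lab{T-Rebind} to each subterm, and reassemble via \lab{T-Sum} or \lab{T-App}. For \lab{RebindAbs}, from $\LambdaExp{\x}{\te}:\incrLevel{\T}{1}$, inversion for abstraction together with the congruence that swaps one unit of level between an outer arrow and its codomain shifts a level unit inward on each conjunct of $\T$, after which one application of \lab{T-Rebind} inside the body and a final \lab{T-Abs} deliver $\LambdaExp{\x}{\Rebind{\te}{\tsubstval}}:\T$.

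The central case is \lab{RebindUnbindYes}: $\Rebind{\Unbind{\tenv'}{\te}}{\tsubstval} \ev \ApplySubst{\te}{\RestrictSubst{\extractSubst{\tsubstval}}{\dom(\tenv')}}$. Inversion of Rebind gives $\IsWFExp{\Gamma}{\Unbind{\tenv'}{\te}}{\incrLevel{\T}{1}}$ and $\IsWFSubst{\Gamma}{\tsubstval}{\OK}$; since every conjunct of $\incrLevel{\T}{1}$ has positive level, inversion of Unbind (part~\ref{il6}) yields $\IsWFExp{\Subst{\Gamma}{\tenv'}}{\te}{\T}$ directly, with no stray $\codeType$ conjuncts to worry about. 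Inversion of the substitution typing (part~\ref{il8}) gives, for each $\x_i{:}\T_i\in\tenv'$, a value type $\V_i\leq\T_i$ with $\IsWFExp{\Gamma}{\val_i}{\V_i}$, hence $\IsWFExp{\Gamma}{\val_i}{\T_i}$ by \lab{T-Sub}. Iterating the Substitution Lemma once per variable in $\tenv'$ then produces $\IsWFExp{\Gamma}{\ApplySubst{\te}{\x_1\mapsto\val_1,\ldots,\x_n\mapsto\val_n}}{\T}$. I expect this case, together with \lab{RebindAbs}, to be the main obstacle: since the paper warns that weakening fails, the iterated use of the single-variable Substitution Lemma must be justified so that each intermediate context and each $\val_i$ avoid collisions with the unbinders in $\tenv'$, and the congruence-driven level shifts used in \lab{RebindAbs} must be performed conjunct-by-conjunct on the intersection type $\T$ while ensuring $\tsubstval$ remains $\OK$ in the extended context inside the abstraction.
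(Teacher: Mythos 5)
Your proposal follows essentially the same route as the paper: induction on the reduction derivation, with the \lab{Ctx}/\lab{CtxError} cases discharged by the Context Lemma plus the induction hypothesis, the \lab{App} case by inversion, Lemma~\ref{l:subtypingArrow}(\ref{l:subtypingArrow1}) and the Substitution Lemma, and the \lab{RebindUnbindYes} case by inverting \lab{T-Rebind}, then the unbind (Lemma~\ref{il}(\ref{il6})) and the substitution judgment (Lemma~\ref{il}(\ref{il8})) before iterating the Substitution Lemma $m$ times, exactly as in the paper's two worked cases. The additional cases you sketch (\lab{RebindNum}, \lab{RebindAbs}, etc.) and the caveat about iterating the single-variable Substitution Lemma in the absence of weakening are consistent with, and slightly more explicit than, what the paper records.
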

\begin{proof}
By induction on reduction derivations. We
only consider some interesting cases.\\

If the applied rule is $\lab{App}$, then
\[{\AppExp{(\LambdaExp{\x}{\te})}{\val}\ev\ApplySubst{\te}{\x\mapsto\val}}\]
From
$\IsWFExp{\tenv}{\AppExp{(\LambdaExp{\x}{\te})}{\val}}{\T}$
by Lemma \ref{il}, case (\ref{il5}) we have that: there is $\V$ such
that
$\IsWFExp{\Gamma}{\LambdaExp{\x}{\te}}{\TypeWithLevel{(\funType{\V}{\T})}{0}}$
and $\IsWFExp{\Gamma}{\val}{\V}$. By Lemma \ref{il}, case
(\ref{il4}) we have that there are $m$, $\T_i$, $\T'_i$ ($i\in 1..m$) such that
  $\TypeWithLevel{(\funType{\V}{\T})}{0}\equiv\bigwedge_{i\in 1..m}\TypeWithLevel{(\funType{\T_i}{\T'_i})}{0}$, and $\IsWFExp{\Subst{\tenv}{\x{:}\T_i}}{\te}{\T'_i}$ ($i\in 1..m$).
From Lemma \ref{l:subtypingArrow}(\ref{l:subtypingArrow1}) we get $\T_i\equiv\V$ for all $i\in 1..m$ and $\bigwedge_{
i\in 1..m}\T'_i\equiv\T$.
Then we can derive
$\IsWFExp{\Subst{\tenv}{\x{:}\V}}{\te}{\T}$ using rules \lab{T-Sub} and \lab{T-Inter}. By Lemma  \ref{sl} we conclude
that $\IsWFExp{\tenv}{\ApplySubst{\te}{x\mapsto \val}}{\T}$.\\

If the applied rule is $\lab{RebindUnbindYes}$, then
\[{\Rebind{\Unbind{\tenv'}{\te}}{\tsubstval}\ev\ApplySubst{\te}{{\RestrictSubst{\extractSubst{\tsubstval}}{\dom(\EZ{\tenv'})}}}}\qquad \tenv'\subseteq\extractTEnv{\tsubstval}\]
\EZ{Let}
$\RestrictSubst{\tsubstval}{{\dom(\tenv')}}=\x_1{:}\T_1\mapsto\val_1,\ldots,\x_m{:}\T_m\mapsto\val_m$.
Since $\tenv'\subseteq\extractTEnv{\tsubstval}$ we have that
$\tenv'=\EZ{\x_1{:}\T_1,\ldots,\x_m{:}\T_m}$. From
$\IsWFExp{\tenv}{\Rebind{\Unbind{\tenv'}{\te}}{\tsubstval}}{\T}$
by Lemma \ref{il}, case (\ref{il7}), we get
$\T=\bigwedge_{i\in 1..n}\TypeWithLevel{\tau_i}{\level_i}$ and
$\IsWFExp{\tenv}{\Unbind{\tenv'}{\te}}{\bigwedge_{i\in 1..n}\TypeWithLevel{\tau_i}{\level_i+1}}$
and $\IsWFSubst{\tenv}{{\tsubstval}}{\OK}$. From Lemma \ref{il}, case
(\ref{il6}),  we have that
$\IsWFExp{\Subst{\tenv}{\tenv'}}{\te}{\bigwedge_{i\in 1..n}\TypeWithLevel{\tau_i}{\level_i}}$.
Moreover, by Lemma \ref{il}, case (\ref{il8}), and rule $\lab{T-Sub}$ we have that
$\IsWFSubst{\tenv}{{\tsubstval}}{\OK}$ implies that
$\IsWFExp{\tenv}{\val_i}{\T_i}$ for $i\in 1..m$. Applying $m$
times Lemma \ref{sl}, we derive
$\IsWFExp{\tenv}{\ApplySubst{\te}{{\RestrictSubst{\extractSubst{\tsubstval}}{\dom(\EZ{\tenv'})}}}}{\T}$.
\end{proof}

In order to show the Progress Theorem (Theorem \ref{pt}), we start
as usual with a Canonical {Forms} Lemma (Lemma \ref{cfl}). Then we
state the standard relation between type contexts and free variables
(Lemma \ref{tcfv}), and lastly we prove that all closed terms which
are {rebound terms} always reduce (Lemma \ref{l:rebindMT}).

\begin{lemma}[Canonical Forms]\label{cfl}\

\begin{enumerate}
\item\label{cfl1} If $\IsWFExp{}{\val}{\TypeWithLevel{\text{\tt int}}{0}}$, then
$\val=\nval$.
\item\label{cfl2} If $\IsWFExp{}{\val}{\TypeWithLevel{\text{\tt code}}{0}}$, then $\val=\Unbind{\tenv}{\te}$.
\item\label{cfl3} If $\IsWFExp{}{\val}{\TypeWithLevel{(\funType{\T}{\T'})}{0}}$, then
$\val=\LambdaExp{\x}{\te}$.
\end{enumerate}
\end{lemma}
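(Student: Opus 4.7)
The plan is to proceed by case analysis on the shape of $\val$. By the grammar, $\val$ is either a numeral $\nval$, an abstraction $\LambdaExp{\x}{\te}$, or an unbound term $\Unbind{\tenv'}{\te}$. For each clause of the lemma I need to show that the given value type forces a unique one of the three shapes, ruling out the other two. The Inversion Lemma (Lemma~\ref{il}) is the main tool: it tells me that a $\nval$ can only have a type $\T$ with $\TypeWithLevel{\intType}{0}\leq\T$, that a lambda can only have a type congruent to an intersection of arrow types at level $0$, and that an unbind can only have a type whose level-$0$ conjuncts have head $\codeType$.

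Concretely, for part~(\ref{cfl1}) I would exclude $\LambdaExp{\x}{\te}$ by invoking Lemma~\ref{il}(\ref{il4}), which would require $\TypeWithLevel{\intType}{0}\congr\bigwedge_{i\in 1..m}\TypeWithLevel{(\funType{\T_i}{\T'_i})}{0}$, and exclude $\Unbind{\tenv'}{\te}$ by invoking Lemma~\ref{il}(\ref{il6}) with $m=1$ and $\level_1=0$, which forces $\tau_1=\codeType$ contrary to $\tau_1=\intType$. Parts~(\ref{cfl2}) and~(\ref{cfl3}) follow the same template: a $\nval$ is ruled out by observing that $\TypeWithLevel{\intType}{0}$ is not a subtype of $\TypeWithLevel{\codeType}{0}$, respectively of a pure arrow type at level $0$, and whichever of $\LambdaExp{\x}{\te}$ or $\Unbind{\tenv'}{\te}$ is inconsistent with the given type is ruled out by Lemma~\ref{il}(\ref{il4}) or Lemma~\ref{il}(\ref{il6}), respectively.

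The hard part will be the auxiliary observation that the three primitive head constructors ($\intType$, $\codeType$, and arrow) are never mixed by the congruence or subtyping rules at the top level of an intersection. Inspecting Figure~\ref{fig:congruence}, distribution and level commutation both keep arrows as arrows, while the remaining rules only rearrange intersections; inspecting Figure~\ref{fig:subtyping}, the only base rule with unequal sides is $\TypeWithLevel{\intType}{\level}\leq\TypeWithLevel{\intType}{\level+1}$, which stays within the $\intType$ family, whereas contravariance keeps arrows as arrows and the structural rules preserve heads via transitivity and intersection monotonicity. Formalizing this properly amounts to defining a function that extracts the set of primitive heads appearing in the outermost intersection of any type and checking that each congruence and subtyping step preserves it; once this preservation property is in hand, all three clauses of the lemma follow immediately.
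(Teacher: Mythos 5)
Your proof is correct and takes essentially the same route as the paper, whose entire argument is ``by case analysis on the shape of values'' with all details left implicit. The auxiliary fact you isolate --- that congruence and subtyping never mix the primitive head constructors $\intType$, $\codeType$ and arrow at the top level of an intersection, so that Lemma~\ref{il} excludes the two wrong value shapes in each clause --- is precisely the content the paper elides, and your sketch of it is sound.
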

\begin{proof}
By case analysis on the \EZ{shape} of values.
\end{proof}

\begin{lemma}\label{tcfv}
If $\IsWFExp{\tenv}{\te}{\T}$, then
$\FV(\te)\subseteq\dom(\tenv)$.
\end{lemma}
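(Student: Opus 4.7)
The plan is to prove Lemma \ref{tcfv} by straightforward induction on the typing derivation $\IsWFExp{\tenv}{\te}{\T}$, handling in parallel the analogous statement for well-typed substitutions: if $\IsWFSubst{\tenv}{\tsubst}{\OK}$, then $\FV(\extractSubst{\tsubst})\subseteq\dom(\tenv)$. The latter is needed because the free variables of $\Rebind{\te}{\tsubst}$ include those of $\extractSubst{\tsubst}$.

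Most cases are immediate. For the base rules \lab{T-Var}, \lab{T-Num}, and \lab{T-Error}, the claim follows directly from the definition of $\FV$ and from the premise $\tenv(\x)=\T$ in the variable case. For rules \lab{T-Sum} and \lab{T-App}, the claim follows by a union of the inductive hypotheses on the immediate subterms. For the non-syntax-directed rules \lab{T-Inter} and \lab{T-Sub}, the conclusion is inherited unchanged from the inductive hypothesis, since the term itself is unchanged.

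The two delicate families of cases concern binders and rebinds. For \lab{T-Abs}, the premise types $\te$ in $\Subst{\tenv}{\x{:}\T}$, so by induction $\FV(\te)\subseteq\dom(\tenv)\cup\{\x\}$, and then $\FV(\LambdaExp{\x}{\te})=\FV(\te)\setminus\{\x\}\subseteq\dom(\tenv)$. Exactly the same reasoning, with the set $\dom(\tenv')$ replacing $\{\x\}$, handles \lab{T-Unbind-0} and \lab{T-Unbind}. For \lab{T-Rebind}, the induction hypothesis applied to the premise $\IsWFExp{\tenv}{\te}{\incrLevel{\T}{1}}$ gives $\FV(\te)\subseteq\dom(\tenv)$, and the auxiliary statement applied to $\IsWFSubst{\tenv}{\tsubst}{\OK}$ gives $\FV(\extractSubst{\tsubst})\subseteq\dom(\tenv)$; combining these with the definition of $\FV(\Rebind{\te}{\tsubst})$ yields the claim. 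Finally, for \lab{T-Rebinding}, each $\te_i$ is typed in $\tenv$, so by induction $\FV(\te_i)\subseteq\dom(\tenv)$ for every $i$, and taking the union gives the auxiliary statement.

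There is no real obstacle: the lemma is fully routine. The only point worth noting, in light of the paper's remark that weakening does not hold and that unbinders must be disjoint from outer binders, is that the contexts in the premises of \lab{T-Abs}, \lab{T-Unbind-0}, and \lab{T-Unbind} are well-formed only because the side conditions implicit in the notation $\Subst{\tenv}{\x{:}\T}$ and $\Subst{\tenv}{\tenv'}$ enforce disjointness; this ensures that $\dom(\Subst{\tenv}{\tenv'})=\dom(\tenv)\cup\dom(\tenv')$ and justifies the set-theoretic step above. No further machinery (and in particular no use of Lemma \ref{l:subtypingArrow} or the Inversion Lemma) is required.
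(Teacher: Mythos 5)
Your proof is correct and follows exactly the route the paper takes: the paper's entire proof is ``By induction on type derivations,'' and your write-up simply fills in the routine details, including the (implicitly needed) parallel statement for the judgment $\IsWFSubst{\tenv}{\tsubst}{\OK}$ and the careful handling of the binder cases. Nothing is missing.
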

\begin{proof}
By induction on type derivations.
\end{proof}

\begin{lemma}\label{l:rebindMT}
If $\te={\Rebind{\te'}{\tsubstval}}$ for some $\te'$ and $\tsubstval$, and
$\FV(\te)=\emptyset$, then $\te\ev\te''$ for some $\te''$.\end{lemma}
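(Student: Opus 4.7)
The plan is a structural induction on $\te'$, the subterm inside the outer rebind. Closedness of $\te$ forces closedness of $\te'$ and of every value in $\tsubstval$ (by inspection of $\FV$ in \refToFigure{subst}), so $\te'$ cannot be a bare variable; for every other shape of $\te'$, one of the $\lab{Rebind\_}$ reduction rules will fire.

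The cases $\te' = \nval$, $\te' = \error$, $\te' = \SumExp{\te_1}{\te_2}$, $\te' = \LambdaExp{\x}{\te_0}$, and $\te' = \AppExp{\te_1}{\te_2}$ are immediate: rules $\lab{RebindNum}$, $\lab{RebindError}$, $\lab{RebindSum}$, $\lab{RebindAbs}$, and $\lab{RebindApp}$ carry no side conditions and fire directly on $\te$. For $\te' = \Unbind{\tenv'}{\te_0}$, I split on whether $\tenv' \subseteq \extractTEnv{\tsubstval}$: if yes, $\lab{RebindUnbindYes}$ applies; if no, $\lab{RebindUnbindNo}$ produces $\error$. For $\te' = \Rebind{\te''}{\tsubst'}$, I appeal to the induction hypothesis on the structurally smaller inner rebind $\Rebind{\te''}{\tsubst'}$---still closed because $\te$ is---to obtain some $\te'''$ with $\Rebind{\te''}{\tsubst'} \ev \te'''$, whereupon rule $\lab{RebindRebind}$ yields $\te \ev \Rebind{\te'''}{\tsubstval}$.

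The main obstacle is the $\lab{RebindUnbindYes}$ subcase: one must check that the resulting term $\ApplySubst{\te_0}{\RestrictSubst{\extractSubst{\tsubstval}}{\dom(\tenv')}}$ is actually well-defined, because the clauses for $\LambdaExp{\cdot}{\cdot}$ and $\Unbind{\cdot}{\cdot}$ in the definition of substitution (\refToFigure{subst}) carry capture-avoidance side conditions. The key observation is that closedness of $\te$ forces every value in $\tsubstval$ to be closed, so the restricted substitution being propagated has $\FV(\subst) = \emptyset$; hence both side conditions $\x \not\in \FV(\subst)$ and $\dom(\tenv) \cap \FV(\subst) = \emptyset$ are trivially satisfied as the substitution descends into $\te_0$, and the application is defined.
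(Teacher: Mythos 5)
Your proof is correct and follows essentially the same route as the paper's: the paper phrases the argument as mathematical induction on the number $n$ of rebinds nested around a non-rebind core $\te'$, whose base case is exactly your case analysis on the shape of $\te'$ and whose inductive step is your $\lab{RebindRebind}$ case. Two remarks. First, a small slip in the setup: you announce ``structural induction on $\te'$'' but in the rebind case you apply the induction hypothesis to $\Rebind{\te''}{\tsubst'}$, which \emph{is} $\te'$ rather than a proper subterm of it; the induction should be on $\te$ itself (or on the rebind-nesting depth, as in the paper), after which the appeal is legitimate because $\te'$ is a proper subterm of $\te$. (Like the paper, you also tacitly assume the inner substitution is a value substitution, so that the induction hypothesis---stated only for $\tsubstval$---applies; the paper's decomposition into $\Rebind{\te'}{\tsubstval_1}\cdots[\tsubstval_n]$ makes the same assumption.) Second, and to your credit, your treatment of the $\lab{RebindUnbindYes}$ case is more careful than the paper's: the paper itself warns elsewhere that application of a substitution is partial because of the capture-avoidance side conditions, yet its proof of this lemma simply asserts that the rule ``is applicable''; your observation that closedness of $\te$ forces the range of $\tsubstval$ to be closed, so that the side conditions hold vacuously and the substitution is total, supplies exactly the step the paper leaves implicit.
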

\begin{proof}
Let $\te=\Rebind{\te'}{\tsubstval_1}\cdots[\tsubstval_n]$
for some $\te'$, $\tsubstval_1$, \ldots, $\tsubstval_n$ ($n\geq 1$), where
$\te'$ is not a
rebind. The proof is by mathematical induction on $n$.\\
If $n=1$, then one of the reduction rules is applicable to
$\Rebind{\te'}{\tsubstval_1}$. Note that, if
$\te'=\Unbind{\tenv}{\te_1}$, then rule $\lab{{RebindUnbindYes}}$
is applicable in case $\tenv$ is {a} subset of the type environment
{associated} with
$\tsubstval_1$, otherwise rule $\lab{RebindUnbindNo}$ is applicable. \\
Let $\te=\Rebind{\te'}{\tsubstval_1}\cdots[\tsubstval_{n+1}]$. If
$\FV(\Rebind{\te'}{\tsubstval_1}\cdots[\tsubstval_{n+1}])=\emptyset$, then
also $\FV(\Rebind{\te'}{\tsubstval_1}\cdots[\tsubstval_{n}])=\emptyset$. By
induction hypothesis
$\Rebind{\te'}{\tsubstval_1}\cdots[\tsubstval_{n}]\ev\te''$, therefore
$\Rebind{\te'}{\tsubstval_1}\cdots[\tsubstval_{n+1}]\ev\Rebind{\te''}{\tsubstval_{n+1}}$
with rule $\lab{RebindRebind}$.
\end{proof}

\begin{theorem}[Progress]\label{pt}
If $\IsWFExp{}{\te}{\V}$, then either $\te$ is a value, or
$\te=\error$, or $\te\ev\te'$ for some $\te'$.
\end{theorem}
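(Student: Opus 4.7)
The plan is to proceed by structural induction on $\te$, relying on the closed hypothesis via Lemma~\ref{tcfv} to rule out the variable case. Numerals, lambda abstractions, and unbinds are already values, while the $\error$ case is immediate. For $\te = \SumExp{\te_1}{\te_2}$ with value type $\V$, Lemma~\ref{il}, case~(\ref{il3}), yields $\TypeWithLevel{\intType}{0} \leq \V$ together with $\IsWFExp{}{\te_i}{\V}$ for $i=1,2$. A short inspection of \refToFigure{subtyping} shows that $\TypeWithLevel{\intType}{0}$ can only sit below an intersection of integer types; combined with $\V$ being a value type this forces $\V \leq \TypeWithLevel{\intType}{0}$, and subsumption retypes each $\te_i$ at value type $\TypeWithLevel{\intType}{0}$. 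The induction hypothesis then splits $\te_1$ into three sub-cases: if $\te_1$ reduces or equals $\error$, the context $\SumExp{\emptycontext}{\te_2}$ together with rule $\lab{Ctx}$ or $\lab{CtxError}$ gives progress; if $\te_1$ is a value, Lemma~\ref{cfl}, case~(\ref{cfl1}), forces $\te_1 = \nval_1$, and the same analysis applied to $\te_2$ using the context $\SumExp{\nval_1}{\emptycontext}$ either makes progress or yields $\te_2 = \nval_2$, in which case rule $\lab{Sum}$ fires. The application case $\AppExp{\te_1}{\te_2}$ is entirely analogous, now using Lemma~\ref{il}, case~(\ref{il5}), and Lemma~\ref{cfl}, case~(\ref{cfl3}), to extract $\te_1 = \LambdaExp{\x}{\te''}$, and concluding with rule $\lab{App}$.

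The interesting case is $\te = \Rebind{\te'}{\tsubst}$. By Lemma~\ref{il}, cases~(\ref{il7}) and~(\ref{il8}), each of the terms $\te_i$ occurring in $\tsubst$ has a value type. If some such $\te_i$ is not itself a value, it is a structurally smaller subterm, so the induction hypothesis says it reduces or equals $\error$; plugging it into the evaluation context $\Rebind{\te'}{\tsubst', \x{:}\T \mapsto \emptycontext}$, where $\tsubst'$ collects the other bindings, and appealing to rule $\lab{Ctx}$ or $\lab{CtxError}$ delivers progress for the whole term. Otherwise every $\te_i$ is already a value, $\tsubst$ is a value substitution $\tsubstval$, the whole $\te = \Rebind{\te'}{\tsubstval}$ is closed, and Lemma~\ref{l:rebindMT} directly supplies a reduction.

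The main obstacle is precisely this last sub-case: inversion assigns the body $\te'$ of a rebind the type $\incrLevel{\V}{1}$, which is \emph{never} a value type, so the structural induction hypothesis cannot be invoked on $\te'$ itself. Lemma~\ref{l:rebindMT} is exactly the ingredient that bridges the gap, carrying out its own induction on the length of the outer chain of rebinds and exhausting the shape of the innermost non-rebind body through the $\lab{Rebind\_}$ reduction rules. A secondary subtlety is the subtyping step in the sum case that upgrades each $\te_i$ from value type $\V$ to $\TypeWithLevel{\intType}{0}$ so that canonical forms applies; this relies on the fact that no rule in \refToFigure{subtyping} changes the primitive head $\tau$ of a type, so that the supertypes of $\TypeWithLevel{\intType}{0}$ remain intersections of integer types.
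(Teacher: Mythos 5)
Your proof is correct and follows essentially the same route as the paper's: the only real difference is that you perform structural induction on the term and recover the needed premises via the Inversion Lemma (which is why you need the small subtyping detour showing $\V\leq\TypeWithLevel{\intType}{0}$ in the sum case), whereas the paper inducts on the typing derivation and reads the premises off the last applied rule, obtaining $\IsWFExp{}{\te_i}{\TypeWithLevel{\intType}{0}}$ directly from $\lab{T-Sum}$. The decisive point --- that the body of a rebind only gets type $\incrLevel{\V}{1}$, so the induction hypothesis is unavailable and Lemma~\ref{l:rebindMT} must be invoked instead --- is identified and handled exactly as in the paper.
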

\begin{proof}
By induction on the derivation of $\IsWFExp{}{\te}{\V}$ with case analysis on the last typing
rule used.\\

If $\te$ is not a value or $\error$, {then} the last  applied rule in the
type derivation cannot be $\lab{T-Num}$, $\lab{T-Error}$,
$\lab{T-Abs}$, $\lab{T-Unbind-0}$, or $\lab{T-Unbind}$. Moreover  the typing environment for the
expression is empty, {hence} by Lemma \ref{tcfv}  the last applied rule  cannot be $\lab{T-Var}$.\\

If the last applied rule is \EZ{$\lab{T-Sub}$}, note that $\T\leq\V$
implies that $\T$ is a value type, and therefore the theorem holds
by induction.\\

If the last applied rule is $\lab{T-App}$, then
$\te={\AppExp{\te_1}{\te_2}}$, and taking into account that the resulting type must be a value type:
 \[\Rule
{\IsWFExp{}{\te_1}{\funType{\V'}{\V}}\Space \IsWFExp{}{\te_2}{\V'}}
{\IsWFExp{}{\AppExp{\te_1}{\te_2}}{\V}}{}\]
 If $\te_1$ is not a
value \EZ{or $\error$}, then, by induction hypothesis, $\te_1\ev\te_1'$. So
$\AppExp{\te_1}{\te_2}= \inContext{\te_1}$ with
$\context=\AppExp{\emptycontext}{\te_2}$, and by rule $\lab{Ctx}$,
$\AppExp{\te_1}{\te_2}\ev\AppExp{\te'_1}{\te_2}$. \EZ{If $\te_1$ is $\error$, we can apply rule $\lab{ContError}$ with the same context.} It $\te_1$ is a
value $\val$, \EZ{and} $\te_2$ is not a value \EZ{or $\error$}, then, by induction
hypothesis, $\te_2\ev\te_2'$. So $\AppExp{\te_1}{\te_2}=
\inContext{\te_2}$ with $\context=\AppExp\val{\emptycontext}$, and
by rule $\lab{Ctx}$, $\AppExp{\val}{\te_2}\ev\AppExp{\val}{\te'_2}$. \EZ{If $\te_2$ is $\error$, we can apply rule $\lab{ContError}$ with the same context.}
If both $\te_1$ and $\te_2$ are values, then by Lemma \ref{cfl},
case (\ref{cfl3}), $\te_1=\LambdaExp{\x}{\te'}$ and, therefore, we
can apply rule $\lab{App}$.
\\

If the last applied rule is $\lab{T-Sum}$, then
$\te=\SumExp{\te_1}{\te_2}$ and taking into account that the
resulting type must be a value type:
\[
\Rule{\IsWFExp{}{\te_1}{\TypeWithLevel{\intType}{0}}\Space\IsWFExp{}{\te_2}
{\TypeWithLevel{\intType}{0}}}{\IsWFExp{}{\SumExp{\te_1}{\te_2}}{\TypeWithLevel{\intType}{0}}}{}
\]
If $\te_1$ is not a value \EZ{or $\error$}, {then,} by induction hypothesis,
$\te_1\ev\te_1'$. So by rule \lab{Ctx}, with context
$\context=\SumExp{\emptycontext}{\te_2}$, we have
${\SumExp{\te_1}{\te_2}}\ev{\SumExp{\te'_1}{\te_2}}$. \EZ{If $\te_1$ is $\error$, we can apply rule $\lab{ContError}$ with the same context.} If $\te_1$
is a value, {then,} by Lemma \ref{cfl},  case (\ref{cfl1}), $\te_1={\nval_1}$. Now, if
$\te_2$ is not a value \EZ{or $\error$}, {then,} by induction hypothesis, $\te_2\ev\te_2'$.
So by rule \lab{Ctx}, with context
$\context=\SumExp{{\nval_1}}{\emptycontext}$, we get
${\SumExp{\te_1}{\te_2}}\ev{\SumExp{\te_1}{\te'_2}}$. \EZ{If $\te_2$ is $\error$, we can apply rule $\lab{ContError}$ with the same context.} Finally,
if $\te_2$ is a value, \EZ{then}  by Lemma \ref{cfl}, case (\ref{cfl1}),
$\te_2={\nval_2}$. Therefore rule \lab{Sum} is applicable.
\\

If the last applied rule is \lab{T-Rebind}, then
$\te={\Rebind{\te'}{\tsubst}}$. If some term $\te_i$ in $\tsubst$ is  not a value, then by Lemma \ref{il}(\ref{il7}) and (\ref{il8})
$\te_i$ is typed with a value type, and therefore $\te_i\ev\te_i'$ by induction, so $\te$ reduces using rule \lab{Ctx}.
Otherwise $\te=\Rebind{\te'}{\tsubstval}$. Since $\IsWFExp{}
{\Rebind{\te'}{\tsubstval}}{\V}{}$, we have
that $\FV(\Rebind{\te'}{\tsubstval})=\emptyset$ by Lemma \ref{tcfv}. From Lemma
\ref{l:rebindMT} we get that $\Rebind{\te'}{\tsubstval}\ev
\te''$ for some $\te''$.
\end{proof}

\section{Conclusion}\label{sect:rwc}

We have defined a type system with intersection types for an
extension of lambda-calculus with unbind and rebind operators
introduced in {previous work} \cite{DezaniEtAl09,DezaniEtAl10}. Besides the
traditional use of intersection types for {typing} (finitely)
polymorphic functions, this type system shows two novel
applications:
\begin{itemize}
\item An intersection type expresses that a term can be used in
contexts which provide a different number  of unbinds.
\item In particular, an unbound term can be used both as a value
of type $\codeType$ and in a context providing an unbind.
\end{itemize}

{This type system could be used for call-by-name with minor
modifications. However, the call-by-value case is more significant
since the condition that the argument of an application must reduce
to a value can be nicely expressed by the notion of value type.

Moreover, only the number of rebindings which are applied to a term
is taken into account, whereas  no check is performed on the name
and the type of the variables to be rebound; this check is performed
at runtime. This solution is convenient, e.g., in distributed
scenarios where code is not all available at compile time, or in
combination with delegation mechanisms where, in case of dynamic
error due to an absent/wrong binding, an alternative action is
taken. In {papers introducing the calculus} \cite{DezaniEtAl09,DezaniEtAl10} we have
also provided an alternative type system (for the call-by-name
calculus) which ensures a stronger form of safety, that is, that
rebinding always succeeds. The key idea is to decorate types
with the names of the variables which need to be rebound, as done
also by  Nanevski and Pfenning \cite{NanevskiPfenning05}. In this
way run-time errors arising from absence (or mismatch) in rebind are
prevented by a purely static type system, at the price of quite
sophisticated types. A similar system could be developed for the
present calculus. Also the type system of
Dezani et al. \cite{DezaniEtAl09,DezaniEtAl10} could be enriched
with intersection types to get the stronger safety for the
call-by-value calculus.}

Intersection types have been originally introduced \cite{coppdeza80} as a language for
descri\-bing and capturing properties of $\lambda$-terms, which had
escaped all previous typing disciplines. For instance, they were used
in order to give the first type theoretic characterisation of {\em
strongly normalising} terms \cite{pott80}, and later in order to capture {\em
(persistently) normalising terms} \cite{coppdezazacc87}.

Very early it was realised that intersection types had also a
distinctive semantical flavour. Namely, they expressed at a
syntactical level the fact that a term belonged to suitable compact
open sets in a Scott domain \cite{barecoppdeza83}.  Since then,
intersection types have been used as a powerful tool both for the
analysis and the synthesis of $\lambda$-models. On the one hand,
intersection type disciplines provide finitary inductive definitions
of interpretation of $\lambda$-terms in models
\cite{coppdezahonslong84}, and they are suggestive for the shape the
domain model has to have in order to exhibit specific
properties \cite{dezaghillika04}.

More recently, systems with both intersection and union types
have been proposed for various aims
\cite{Barbanera-Dezani-deLiguoro:IC-95,FCB08}, but we do not see any
gain in adding union types in the present setting.

Ever since the accidental discovery of dynamic scoping in McCarthy's
Lisp 1.0, there has been extensive work in explaining and
integrating {mechanisms for} dynamic and static binding. The
classical reference for dynamic scoping is  {Moreau's paper}
\cite{M98}, which introduces a $\lambda$-calculus with two distinct
kinds of variables: {\em static} and {\em dynamic}. The semantics
can be (equivalently) given either by translation in the standard
$\lambda$-calculus or directly. In the translation semantics,
$\lambda$-abstractions  have an additional parameter corresponding
to the application-time context. In the direct semantics, roughly,
an application $\AppExp{(\LambdaExp{\x}{\te})}{\val}$, where $\x$ is
a dynamic variable, reduces to  a \emph{dynamic let}
$\DLet{\x}{\val}{\te}$.  In this construct, free occurrences of $\x$
in $\te$ are not immediately replaced by $\val$, as in the standard
static let, but rather reduction of $\te$ is started. When, during
this reduction, an occurrence of $\x$ is found in redex position, it
is replaced by the value of $\x$ in the innermost enclosing
\texttt{dlet}, so that dynamic scoping is obtained.

In our calculus, the behaviour of the dynamic let is obtained by the
unbind and rebind constructs. However, there are at least two
important differences. Firstly, the unbind construct allows the
programmer to explicitly control the program portions where a
variable should be dynamically bound. In particular, occurrences of
the same variable can be bound either statically or dynamically,
whereas {Moreau} \cite{M98} {assumes} two distinct sets.
Secondly, our rebind behaves in a hierarchical way, whereas, taking
{Moreau's} approach \cite{M98} where the innermost binding
is selected, a new rebind for the same variable would rewrite the
previous one, as also in {work by Dezani et al.} \cite{DGN08b}. For instance,
$\Rebind{\Rebind{\Unbind{\x}{\x}}{\x\mapsto 1}}{\x\mapsto 2}$ would
reduce to $2$ rather than to $1$. The advantage of our semantics, at
the price of a more complicated type system, is again more control.
In other words, when  the programmers want to use ``open code'',
they must explicitly specify the desired {binding}, whereas in
{Moreau's paper} \cite{M98} code containing dynamic variables is
automatically rebound with the binding which accidentally exists
when it is used. This semantics, when desired, can be recovered in
our calculi by using rebinds of the shape
$\Rebind{\te}{\x_1\mapsto\x_1,\ldots,\x_n\mapsto\x_n}$, where
$\x_1,\ldots,\x_n$ are all the dynamic variables which occur in
$\te$.

Other calculi for dynamic binding and/or rebinding have been
proposed {\cite{Dami97alambda-calculus,
KSS06, BiermanEtAl03a}}. We refer to {our previous papers introducing the calculus}
\cite{DezaniEtAl09,DezaniEtAl10} for a discussion and comparison.

As already mentioned, an interesting feature of our calculus is
that elements of the same set can play the double role of
\emph{standard variables}, which can be $\alpha$-renamed, and
\emph{names}, which cannot be $\alpha$-renamed (if not globally in a
program) {\cite{AnconaMoggi04, NanevskiPfenning05}}.
The crucial difference is that in the case of standard variables
the matching betweeen parameter and argument is done on a
\emph{positional} basis, as demonstrated by the de Bruijn notation,
whereas in the case of names it is done on a \emph{nominal} basis.
An analogous difference holds between tuples and records, and
between positional and name-based parameter passing in languages, as
recently discussed  {by Rytz and Odersky} \cite{RytzOdersky10}. \EZComm{La notazione a
record ha avuto successo nell'object-oriented, mentre nei linguaggi
funzionali si usano tuple per funzioni non curried, ma nei linguaggi
funzionali reali comunque poi ci sono sempre anche i record. La
notazione posizionale \`e vantaggiosa per chi scrive il codice
perch\'e non \`e vincolato a nomi particolari. Per il cliente invece
\`e migliore la notazione nominale perch\`e in genere i nomi sono
significativi. In ogni caso comunque il cliente e il fornitore
devono ``mettersi d'accordo'' su una convenzione, che sia di nome o
di posizione.}

 Distributed process calculi provide rebinding of names, see for
instance {the work of Sewell} \cite{Sewell2007}. Moreover, rebinding for
distributed calculi has been studied {\cite{AFG07}},
where, however, the problem of integrating rebinding with
standard computation is not addressed, so there is no interaction
between static and dynamic binding.

Finally, an important source of inspiration has been multi-stage
programming as, e.g., in {MetaML} \cite{MetaML}, notably for
the idea of allowing (open) code as a special value, the
hierarchical nature of the {unbind/rebind} mechanism and,
correspondingly, of the type system. The type system of Taha and
Sheard \cite{MetaML} is more expressive than the present one, since
both the turn-style and the types are decorated with integers. A
deeper comparison will be subject of further work.

In order to model different behaviours according
to the presence (and type concordance) of variables in the
rebinding environment, we plan to add a construct for conditional execution of rebind \MD{\cite{DGN08b}}. With this construct  we could model a
variety of object models, paradigms and language features.

Future investigation will also deal with the general form of
binding discussed by Tanter \cite{T09}, which subsumes both static and
dynamic binding and also allows fine-grained bindings which can
depend on contexts and environments.

\textbf{Acknowledgments.} We warmly thank the anonymous referees
for their useful comments. In particular, one referee {warned us about}
the problem of avoiding variable capture when applying substitution
to an unbound term. We also thank Davide Ancona for {pointing out the
work by Rytz and Odersky} \cite{RytzOdersky10} and  the analogy among the pairs
variable/name, tuple/record, positional/nominal, any
misinterpretation is, of course, our responsibility. 

\bibliographystyle{eptcs}
\bibliography{Main}

\end{document}